\newcommand{\mc}{\mathcal}
\newtheorem{theorem}{Theorem}[section]
\newtheorem{definition}[theorem]{Definition}
\newtheorem{corollary}[theorem]{Corollary}
\begin{document}

\title{``To sense" or ``not to sense" in energy-efficient power control games}
%
%
%
%
%

%

\author{\IEEEauthorblockN{Ma\"{e}l Le Treust\IEEEauthorrefmark{1},
Yezekael Hayel\IEEEauthorrefmark{2},
Samson Lasaulce\IEEEauthorrefmark{1} and
M\'{e}rouane Debbah \IEEEauthorrefmark{3}}
\IEEEauthorblockA{\IEEEauthorrefmark{1}Laboratoire des Signaux et Syst\`{e}mes,
CNRS  - Universit\'{e} Paris-Sud 11 - Sup\'{e}lec,
91191, Gif-sur-Yvette Cedex, France\\
Email: \{letreust\},\{lasaulce\}@lss.supelec.fr}
\IEEEauthorblockA{\IEEEauthorrefmark{2}
Laboratoire d'information d'Avignon,
Universit\'{e} d'Avignon,
84911 Avignon, France\\
Email: yezekael.hayel@univ-avignon.fr}
\IEEEauthorblockA{\IEEEauthorrefmark{3}
Chaire Alcatel-Lucent,
SUPELEC,
91190 Gif-sur-Yvette, France\\
Email: debbah@supelec.fr}}



\maketitle

\begin{abstract}
A network of cognitive transmitters is considered. Each transmitter
has to decide his power control policy in order to maximize
energy-efficiency of his transmission. For this, a transmitter has
two actions to take. He has to decide whether to sense the power
levels of the others or not (which corresponds to a finite sensing
game), and to choose his transmit power level for each block (which
corresponds to a compact power control game). The sensing game is
shown to be a weighted potential game and its set of correlated
equilibria is studied. Interestingly, it is shown that the general
hybrid game where each transmitter can jointly choose the hybrid
pair of actions (to sense or not to sense, transmit power level)
leads to an outcome which is worse than the one obtained by playing
the sensing game first, and then playing the power control game.
This is an interesting Braess-type paradox to be aware of for
energy-efficient power control in cognitive networks.
\end{abstract}


\section{Introduction}

In fixed communication networks, the paradigm of peer-to-peer
communications has known a powerful surge of interest during the the
two past decades with applications such as the Internet. Remarkably,
this paradigm has also been found to be very useful for wireless
networks. Wireless ad hoc and sensor networks are two illustrative
examples of this. One important typical feature of these networks is
that the terminals have to take some decisions in an autonomous
(quasi-autonomous) manner. Typically, they have to choose their
power control and resources allocation policy. The corresponding
framework, which is the one of this paper, is the one of distributed
power control or resources allocation. More specifically, the
scenario of interest is the case of power control in cognitive
networks. Transmitters are assumed to be able to sense the power
levels of neighboring transmitters and adapt their power level
accordingly. The performance metric for a transmitter is the
energy-efficiency of the transmission \cite{goodman-pc-2000} that is, the
number of bits successfully decoded by the receiver per Joule
consumed at the transmitter.

The model of multiuser networks which is considered is a multiple
access channel with time-selective non-frequency selective links.
Therefore, the focus is not on the problem of resources allocation
but only on the problem of controlling the transmit power over
quasi-static channels. The approach of the paper is related to the
one of \cite{lasaulce-twc-2009}\cite{HeLasaulceHayel-infocom11} where some hierarchy
is present in the network in the sense that some transmitters can
observe the others or not; also the problem is modeled by a game
where the players are the transmitters and the strategies are the
power control policies. One the differences with
\cite{lasaulce-twc-2009}\cite{HeLasaulceHayel-infocom11} is that \emph{every}
transmitter can be cognitive and sense the others but
observing/sensing the others has a cost. Additionally, a new type of
power control games is introduced (called hybrid power control
games) in which an action for a player has a discrete component
namely, to sense or not to sense, and a compact component namely,
the transmit power level. There are no general results for
equilibrium analysis in the game-theoretic literature. This is a
reason why some results are given in the 2-player case only, as a
starting point for other studies. In particular, it is shown that it
is more beneficial for every transmitter to choose his discrete
action first and then his power level. The (finite) sensing game is
therefore introduced here for the first time and an equilibrium
analysis is conducted for it. Correlated equilibria are considered
because they allow the network designer to play with fairness, which
is not possible with pure or mixed Nash equilibria.

This paper is structured as follows. A review of the previous results
regarding the one-shot energy efficient power control game is presented
in Sec. 2. The sensing game is formally defined and some equilibrium
 results are stated in Sec. 3. A detailed analysis of the
 2-players sensing is provided in Sec. 4 and the conclusion appears in Sec. 5.

\section{Review of known results}

\subsection{Review of the one-shot energy-efficient power control game (without
sensing)}

We review a few key results from \cite{GoodMandPCWD00}
concerning the static non-cooperative PC game. In order to define
the static PC game some notations need to be introduced. We denote
by $R_i$ the transmission information rate (in bps) for user $i$ and
$f$ an efficiency function representing the block success rate,
which is assumed to be sigmoidal and identical for all the users;
the sigmoidness assumption is a reasonable assumption, which is well
justified in
\cite{Rodriguez-globecom-2003}\cite{Meshkati-tcom-2005}. Recently,
\cite{Belmega-valuetools-2009} has shown that this assumption is
also justified from an information-theoretic standpoint. At a given
instant,
 the SINR at receiver $i \in \mc{K}$ writes as:
\begin{equation}
\label{eq:sinr-ne} \mathrm{SINR}_i=\frac{p_i|g_i|^2}{\sum_{j \neq i}
p_j|g_j|^2+\sigma^2}
\end{equation}
where $p_i$ is the power level for transmitter $i$, $g_i$ the
channel gain of the link between transmitter $i$ and the receiver,
$\sigma^2$ the noise level at the receiver, and $f$ is a sigmodial
efficiency function corresponding to the block success rate. With
these notations, the static PC game, called $\mc{G}$, is defined in
its normal form as follows.

\begin{definition}[Static PC game] \emph{The static PC game is a triplet
$\mc{G} = (\mc{K}, \{\mc{A}_i\}_{i\in\mc{K}},\{u_i\}_{i\in\mc{K}})$
where $\mc{K}$ is the set of players, $\mc{A}_1,...,\mc{A}_K$ are
the corresponding sets of actions, $\mc{A}_i = [0,
P_i^{\mathrm{max}}]$, $P_i^{\mathrm{max}}$ is the maximum transmit
power for player $i$, and $u_1,...,u_k$ are the utilities of the
different players which are defined by:}
\begin{eqnarray}
\label{eq:def-of-utility} u_i(p_1,...,p_K)= \frac{R_i
f(\mathrm{SINR}_i)}{p_i} \ [\mathrm{bit} / \mathrm{J}].
\end{eqnarray}
\end{definition}
In this game with complete information ($\mc{G}$ is known to every
player) and rational players (every player does the best for himself
and knows the others do so and so on), an important game solution
concept is the NE (i.e., a point from which no player has interest
in unilaterally deviating). When it exists, the non-saturated NE of
this game can by obtained by setting $\frac{\partial u_i}{\partial
p_i}$ to zero, which gives an equivalent condition on the SINR: the
best SINR in terms of energy-efficiency for transmitter $i$ has to
be a solution of $xf'(x)-f(x)=0$ (this solution is independent of
the player index since a common efficiency function is assumed, see
\cite{Meshkati-tcom-2005} for more details). This leads to:
\begin{equation} \forall i \in \{1,...,K \}, \
p_i^{*}= \frac{\sigma^2}{|g_i|^2} \frac{\beta^*}{1-(K-1)\beta^{*}}
\label{eq:NE-power}
\end{equation}
where $\beta^{*}$ is the unique solution of the equation
$xf'(x)-f(x)=0$. By using the term ``non-saturated NE'' we mean that
the maximum transmit power for each user, denoted by $
P_i^{\mathrm{max}}$, is assumed to be sufficiently high not to be
reached at the equilibrium i.e., each user maximizes his
energy-efficiency for a value less than $P_i^{\mathrm{max}}$ (see
\cite{lasaulce-twc-2009} for more details). An important property of
the NE given by (\ref{eq:NE-power}) is that transmitters only need
to know their individual channel gain $|g_i|$ to play their
equilibrium strategy. One of the interesting results of this paper
is that it is possible to obtain a more efficient equilibrium point
by repeating the game $\mc{G}$ while keeping this key property.

\subsection{Review of the Stackelberg energy-efficient power control game (with
sensing)}

Here we review a few key results from \cite{HeLasaulceHayel-infocom11}.
The framework addressed in \cite{HeLasaulceHayel-infocom11} is that the
existence of two classes of transmitters are considered: those who
can sense and observe the others and those who cannot observe. This
establishes a certain hierarchy between the transmitters in terms of
observation. A suited model to study this is the Stackelberg game
model \cite{stackelberg-book-1934}: some players choose their transmit
power level (these are the leaders of the power control game) and
the others observe the played action and react accordingly (these
are the followers of the game). Note that the leaders know they are
observed and take this into account for deciding. This leads to a
game outcome (namely a Stackelberg equilibrium) which
Pareto-dominates the one-shot game Nash equilibrium (given by
(\ref{eq:NE-power})) when there is no cost for sensing
\cite{lasaulce-twc-2009}. However, when the fraction of time to sense
is taken to be $\alpha >0$, the data rate is weighted by
$(1-\alpha)$ and it is not always beneficial for a transmitter to
sense \cite{HeLasaulceHayel-infocom11}. The equilibrium action and utility
for player $i$ when he is a game leader (L) are respectively given
by
\begin{equation}
p_i^L=\frac{\sigma^2}{|g_i|^2}
\frac{\gamma^*(1+\beta^*)}{1-(K-1)\gamma^*\beta^*-(K-2)\beta^*}
\end{equation}
where $\gamma^*$ is the unique solution of $x
\left[1-\frac{(K-1)\beta^*}{1-(K-2)\beta^*} x\right] f'(x) - f(x)
=0$ and
\begin{equation}
u^L_i=\frac{|g_i|^2}{\sigma^2}\frac{1-(K-1)\gamma^*\beta^*
-(K-2)\beta^*}{\gamma^*(1+\beta^*)}f(\gamma^*).
\end{equation}
On the other hand, if player $i$ is a follower (F) we have that:
\begin{equation}
p_i^F=\frac{\sigma^2}{|g_i|^2}\frac{\beta^*(1
+\gamma^*)}{1-(K-1)\gamma^*\beta^*-(K-2)\beta^*}
\end{equation}
and
\begin{equation}
u^F_i= (1-\alpha)
\frac{|g_i|^2}{\sigma^2}\frac{1-(K-1)\gamma^*\beta^*
-(K-2)\beta^*}{\beta^*(1+\gamma^*)}f(\beta^*).
\end{equation}

\section{A new game: the $K-$player sensing game}

\subsection{Sensing game description}

In the two hierarchical power control described above, the
transmitter is, by construction, either a cognitive transmitter or a
non-cognitive one and the action of a player consists in choosing a
power level. Here, we consider that all transmitters can sense, the
power level to be the one at the Stackelberg equilibrium, and the
action for a player consists in choosing to sense (S) or not to
sense (NS). This game is well defined only if at least one player is
a follower (i.e., he senses) and one other is the leader (i.e., he
does not sense). We assume in the following that the total number of
transmitters is $K+2$, where $K$ transmitters are considered as
usual players and the two last are a follower and a leader. Define
the $K-$player sensing game as a triplet:
\begin{eqnarray}
G=(K,(\mathcal{S})_{i\in K},(U_i)_{i\in K})
\end{eqnarray}
where the actions set are the same for each player $i\in K$, sense
or not sense: $\mathcal{S}=(S,NS)$. The utility function of each
player $i\in K$ depends on his own channel state $g_i$ and
transmission rate $R_i$ but also on the total number of players $F$
playing the sensing action and the number of players that non sense
denoted $L$. Denote $U_i^S(F,L)$ the utility of player $i$ when
playing action sensing $S$ whereas $F-1$ other players are also
sensing and $L$ other players are non-sensing. The total number of
player is $F + L = K$.

\begin{eqnarray*}
U_i^S(F,L)&=&\frac{g_iR_i}{\sigma^2}\frac{f(\beta^*)}{N\beta^{\star}(N+\gamma_{L+1}^{\star})}\\
&&\left(N^2\!-\!N\beta^{\star} - \left[(N\!+\!\beta^{\star})L\!+ (F+1) \beta^{\star}\right]\gamma^{\star}_{L+1}\right)\\
U_i^{NS}(F,L)&=&\frac{g_iR_i}{\sigma^2}\frac{f(\gamma^*_L)}{N\gamma_{L+1}^{\star} (N + \beta^{\star})}\\
&&\left(N^2\!-\!N\beta^{\star} - \left[(N\!+\!\beta^{\star})L\!+
(F+1) \beta^{\star}\right]\gamma^{\star}_{L+1}\right)
\end{eqnarray*}
with $\gamma^*_L$ solution of $x(1-\epsilon_L x)f'(x)=f(x)$ with:
\begin{eqnarray}
\epsilon_L=\frac{(K +2 -L)\beta^{\star}}{N^2-N(K + 1
-L)\beta^{\star}}.
\end{eqnarray}

\subsection{The sensing game is a weighted potential game}

The purpose of this section is to show that the sensing game may be
an exact potential game. However, this holds under restrictive
assumptions on the channel gains. It is then shown, as a second
step, that the game is a weighted potential game. For making this
paper sufficiently self-containing we review important definitions
to know on potential games.

\begin{definition}[Monderer and Shapley 1996 \cite{Monderer-Shapley-1996}]
The normal form game $G$ is a potential game ) if there is a
potential function $V : S \longrightarrow \mathbb{R}$ such that
\begin{eqnarray}
U_i(s_i,s_{-i}) -  U_i(t_i,s_{-i}) = V(s_i,s_{-i}) -  V(t_i,s_{-i}),\\
\quad \forall i\in K,\; s_i,t_i\in \mathcal{S_i}
\end{eqnarray}
\end{definition}

\begin{theorem}
The sensing game $G=(K,(\mathcal{S})_{i\in K},(U_i)_{i\in K})$ is an
exact potential game if and only if one of the two following
conditions is satisfied.
\begin{eqnarray*}
1)&&\forall i,j\in K\quad R_ig_i=R_jg_j\\
2)&&\forall i,j\in K,\; s_i,t_i\in S_i,\;\forall s_j,t_j\in S_j,\;\forall s_k\in S_{K\backslash \{i,j\}}\\
&&U^{T}(t_i,s_j,s_k)- U^S(s_i,s_j,s_k)\\
&& + U^S(s_i,t_j,s_k) -U^T(t_i,t_j,s_k)= 0
\end{eqnarray*}
\end{theorem}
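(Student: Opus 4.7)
The approach is to apply the Monderer--Shapley characterisation of exact potential games: $G$ is an exact potential game if and only if, for every pair of players $i,j\in K$, every choice of alternative actions $(s_i,t_i)$ and $(s_j,t_j)$, and every reduced profile $s_{-ij}$ of the remaining players, the signed sum of utility variations around the four-cycle
\begin{equation*}
(s_i,s_j)\to (t_i,s_j)\to (t_i,t_j)\to (s_i,t_j)\to (s_i,s_j)
\end{equation*}
vanishes. Since $\mathcal{S}_i=\{S,NS\}$ has cardinality two, this cycle sum is automatically zero whenever $s_i=t_i$ or $s_j=t_j$, so only one genuinely nontrivial cycle remains for each pair $(i,j)$ and each $s_{-ij}$; that residual identity is precisely condition~(2) of the statement.

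The second step is to exploit the precise form of the utilities. Both $U_i^S(F,L)$ and $U_i^{NS}(F,L)$ share the prefactor $g_iR_i/\sigma^2$ and a bracket in $(F,L)$ that is common to all players, so they can be written as $U_i^{a_i}(F,L)=(g_iR_i/\sigma^2)\,h(a_i,F,L)$ with $h$ independent of $i$. Evaluating $h$ at the four corners of the nontrivial cycle, taking $s_i=S$, $t_i=NS$, $s_j=S$, $t_j=NS$, and letting $m$ denote the number of sensors among the $K-2$ fixed players, I would show that the cycle sum collapses, after cancellations, to
\begin{equation*}
\frac{g_iR_i-g_jR_j}{\sigma^2}\,\Delta(m,K),
\end{equation*}
where $\Delta(m,K)$ depends only on the common function $h$ and on $m$, not on the identities $i,j$. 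This factorisation is the technical heart of the argument.

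From the factorisation the equivalence is immediate. If condition~(1) holds then $g_iR_i=g_jR_j$ for every pair $i,j$, the first factor vanishes and the cycle sum is zero for every $(i,j,s_{-ij})$; Monderer--Shapley then supplies a potential. If instead condition~(2) holds, the cycle sum vanishes by hypothesis and the same conclusion applies. Conversely, if $G$ is an exact potential game, Monderer--Shapley forces the cycle sum to be zero for every $(i,j,s_{-ij})$, which is exactly condition~(2).

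The main obstacle, and the step I expect to consume most of the work, is verifying the algebraic collapse claimed above. The bracket $N^2-N\beta^{\star}-[(N+\beta^{\star})L+(F+1)\beta^{\star}]\gamma^{\star}_{L+1}$ appearing in both $U_i^S$ and $U_i^{NS}$ depends nonlinearly on $(F,L)$ through the implicitly defined $\gamma^{\star}_{L+1}$, and the denominators involve different combinations of $\beta^{\star}$ and $\gamma^{\star}_{L+1}$ depending on whether the player senses or not. Showing that these pieces recombine so that the player-specific information enters the cycle sum only through the scalar $g_iR_i-g_jR_j$ is the sole non-routine computation; once it is done, conditions~(1) and~(2) appear as the two natural ways --- kill the left factor, or kill the right factor --- of making the cycle vanish.
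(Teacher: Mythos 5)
Your proposal follows essentially the same route as the paper's Appendix 4: invoke the Monderer--Shapley four-cycle characterization, factor each utility as $w_i\,h(a_i,F,L)$ with $w_i=R_ig_i/\sigma^2$ and $h$ player-independent, and collapse the cycle sum to $(w_i-w_j)$ times a player-independent quantity, so that conditions (1) and (2) are exactly the two ways of annihilating the product. The only slip is in your converse: an exact potential game forces $(w_i-w_j)\Delta=0$ for every configuration, which yields ``condition (1) \emph{or} condition (2),'' not condition (2) alone (a game with all $R_ig_i$ equal but $\Delta\neq 0$ is still a potential game); since you state the correct disjunction in your closing sentence, this is a wording error rather than a missing idea, but the ``only if'' paragraph should be rewritten accordingly.
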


The Proof is given in the Appendix 4.

The potential functions of our game depends on which condition is
satisfied in the above theorem. Suppose that the  first condition is
satisfied $\forall i,j\in K\quad R_ig_i=R_jg_j$. Then the
Rosenthal's potential function writes :
\begin{eqnarray*}
\Phi(F,L)&=& \sum_{i=1}^F U^S(i,K-i) + \sum_{j=1}^L U^{NS}(K-j,j)
\end{eqnarray*}

\begin{theorem}[Potential Game \cite{Monderer-Shapley-1996}]
Every finite potential game is isomorphic to a congestion game.
\end{theorem}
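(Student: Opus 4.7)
The plan is to apply the classical construction of Monderer and Shapley, which realises any finite potential game as a congestion game. Recall that a congestion game is specified by a facility set $M$, strategy sets $\Sigma_i \subseteq 2^M$, and cost functions $c_e$ depending on the number of users $n_e(\sigma)$ of facility $e$, with player payoffs $u_i(\sigma) = \sum_{e \in \sigma_i} c_e(n_e(\sigma))$. Rosenthal's identity $\Phi(\sigma) = \sum_{e \in M} \sum_{k=1}^{n_e(\sigma)} c_e(k)$ exhibits $\Phi$ as an exact potential, so the task here is the converse: to synthesise facilities and cost functions that reproduce the potential $V$ of an arbitrary finite potential game.

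I would proceed in three steps. First, \emph{build the facilities}: for every nonempty coalition $T \subseteq K$ and every partial profile $y_T \in \prod_{i \in T} \mathcal{S}_i$, introduce a facility $e_{T,y_T}$. Second, \emph{encode the strategies}: identify each original action $s_i \in \mathcal{S}_i$ with the subset $\sigma_i(s_i) = \{e_{T,y_T} : i \in T,\; (y_T)_i = s_i\}$ of facilities. A direct check shows that for any joint profile $s$, the load $n_{e_{T,y_T}}(s)$ equals $|T|$ when $y_T$ coincides with $s_T$ and is strictly smaller otherwise. Third, \emph{calibrate the costs}: declare $c_{e_{T,y_T}}(k) = 0$ for $k < |T|$ and leave only $c_{e_{T,y_T}}(|T|)$ free. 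Then Rosenthal's potential collapses to $\sum_{T} c_{e_{T,s_T}}(|T|)$, and the free coefficients can be determined by Möbius inversion over the Boolean lattice of subsets of $K$ so as to match the target potential $V$.

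The main obstacle is ensuring that this construction reproduces not only the potential but also the utility functions themselves, hence a genuine game isomorphism. Here the defining property of a potential game enters crucially: two utility vectors sharing the same potential differ only by functions of the opponents' actions, and such profile-local terms are exactly what the Möbius coefficients attached to the non-singleton coalitions can absorb. Once this is verified, the map $s_i \mapsto \sigma_i(s_i)$ is a bijection between $\mathcal{S}_i$ and $\Sigma_i$ that intertwines payoffs, and the resulting congestion game is isomorphic to $G$. The chief difficulty is therefore combinatorial bookkeeping rather than conceptual, and no new ingredient beyond Monderer and Shapley's original argument is needed.
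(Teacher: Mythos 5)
The paper does not actually prove this statement: it is quoted verbatim from Monderer and Shapley (1996) and used as a black box, so there is no internal proof to compare against. Judged on its own, your sketch is a faithful outline of the classical construction and the skeleton is sound: facilities $e_{T,y_T}$ indexed by coalitions and partial profiles, the encoding $\sigma_i(s_i)=\{e_{T,y_T}: i\in T,\ (y_T)_i=s_i\}$, the observation that $n_{e_{T,y_T}}(s)=|T|$ iff $y_T=s_T$, and the vanishing of costs below full load are all correct, and they reduce the problem to writing $u_i(s)=\sum_{T\ni i} v_T(s_T)$ for coalition functions $v_T$ shared by all members of $T$.

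The one step you assert rather than prove is precisely the crux: the ``calibration'' of the free coefficients so as to reproduce the individual utilities and not merely the potential. Your appeal to M\"obius inversion over the Boolean lattice is both vaguer and heavier than necessary, and as stated it does not explain why a term $d_i(s_{-i})$ (which does not depend on $s_i$) can be generated by facilities that player $i$ actually occupies (all of which are indexed by coalitions containing $i$). The explicit assignment that closes this is short: since $G$ admits the potential $V$, each utility decomposes as $u_i = V + d_i(s_{-i})$; now set $v_K(s) = V(s) + \sum_{j} d_j(s_{-j})$, set $v_{K\setminus\{j\}}(s_{-j}) = -\,d_j(s_{-j})$ for each $j$, and $v_T \equiv 0$ for every other coalition. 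Then for each player $i$, $\sum_{T\ni i} v_T(s_T) = v_K(s) + \sum_{j\neq i} v_{K\setminus\{j\}}(s_{-j}) = V + d_i(s_{-i}) = u_i(s)$, only the coalitions $K$ and $K\setminus\{j\}$ are ever needed, and your facility/cost construction then yields a congestion game isomorphic to $G$ via $s_i\mapsto\sigma_i(s_i)$ (injective because the sets $\sigma_i(s_i)$ for distinct $s_i$ are disjoint). With that substitution for your third step, the argument is complete and is essentially Monderer and Shapley's decomposition into a coordination game plus dummy games.
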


\begin{definition}[Monderer and Shapley 1996 \cite{Monderer-Shapley-1996}]
The normal form game $G$ is a weighted potential game if there is a
vector $(w_i)_{i\in K}$ and a potential function $V : S
\longrightarrow \mathbb{R}$ such that:
\begin{eqnarray*}
&&U_i(s_i,s_{-i}) -  U_i(t_i,s_{-i}) = w_i(V(s_i,s_{-i}) -  V(t_i,s_{-i})),\\
&&\quad \forall i\in K,\; s_i,t_i\in \mathcal{S}_i
\end{eqnarray*}
\end{definition}

\begin{theorem}
The sensing game $G=(K,(\mathcal{S}_i)_{i\in K},(U_i)_{i\in K})$ is
a weighted potential game with the weight vector:
\begin{eqnarray}
\forall i\in K\quad w_i = \frac{R_ig_i}{\sigma^2}
\end{eqnarray}
\end{theorem}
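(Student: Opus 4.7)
The plan is to exploit a simple structural feature of the utilities: both $U_i^S(F,L)$ and $U_i^{NS}(F,L)$, as written in Section~III.A, factor as a product of a player-specific constant and an action/state-dependent term that does not depend on the identity $i$. Concretely, setting
\begin{eqnarray*}
\phi^S(F,L)&=&\frac{f(\beta^{\star})}{N\beta^{\star}(N+\gamma_{L+1}^{\star})}\\
&&\left(N^2-N\beta^{\star}-\bigl[(N+\beta^{\star})L+(F+1)\beta^{\star}\bigr]\gamma_{L+1}^{\star}\right),\\
\phi^{NS}(F,L)&=&\frac{f(\gamma^{\star}_L)}{N\gamma_{L+1}^{\star}(N+\beta^{\star})}\\
&&\left(N^2-N\beta^{\star}-\bigl[(N+\beta^{\star})L+(F+1)\beta^{\star}\bigr]\gamma_{L+1}^{\star}\right),
\end{eqnarray*}
we have $U_i^{a}(F,L)=w_i\,\phi^{a}(F,L)$ for $a\in\{S,NS\}$, with $w_i=R_ig_i/\sigma^2$. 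This separation is the entire engine of the proof.

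Next, I would borrow the classical Rosenthal construction from congestion games and define the candidate potential
\begin{equation*}
V(F,L)=\sum_{k=1}^{F}\phi^S(k,K-k)+\sum_{k=1}^{L}\phi^{NS}(K-k,k),
\end{equation*}
which depends only on the aggregate profile through the pair $(F,L)$ (recall $F+L=K$). Because every unilateral deviation in a two-action game is either $S\to NS$ or $NS\to S$, I only need to verify the weighted-potential identity for these two cases.

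For a player $i$ deviating from $S$ to $NS$, the profile moves from $(F,L)$ to $(F-1,L+1)$. Using $K-F=L$ and $K-(L+1)=F-1$, a direct telescoping gives
\begin{equation*}
V(F,L)-V(F-1,L+1)=\phi^S(F,L)-\phi^{NS}(F-1,L+1),
\end{equation*}
while the utility change is
\begin{equation*}
U_i^S(F,L)-U_i^{NS}(F-1,L+1)=w_i\bigl[\phi^S(F,L)-\phi^{NS}(F-1,L+1)\bigr].
\end{equation*}
The $NS\to S$ deviation is handled by relabeling. The two equations combined yield exactly the weighted potential identity with weights $w_i=R_ig_i/\sigma^2$, completing the proof.

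The only place where something could go wrong is the bookkeeping on the telescoping sum (making sure the indices in the two sums are shifted consistently with the convention that $F$ counts the deviating player himself when he plays $S$). Beyond this minor accounting step, there is no genuine obstacle: the main point, which is the common $(F,L)$-dependence of $\phi^S$ and $\phi^{NS}$ across players, is already visible in the closed-form utilities, and it is precisely the failure of this factorization when one instead tries to absorb $w_i$ into the potential itself that forces us into the \emph{weighted} rather than \emph{exact} potential framework of Theorem~3.1.
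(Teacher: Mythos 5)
Your proposal is correct, and it reaches the conclusion by a more constructive route than the paper. The paper's Appendix 5 never exhibits a potential function: it normalizes the utilities, $\tilde{U}_i = U_i / w_i$ with $w_i = R_i g_i/\sigma^2$, and then shows that the auxiliary game $\widetilde{G}$ satisfies the Monderer--Shapley four-cycle condition already verified in Appendix 4, so that $\widetilde{G}$ is an exact potential game and hence $G$ is weighted potential. You instead write down the Rosenthal potential $V(F,L)=\sum_{k=1}^{F}\phi^S(k,K-k)+\sum_{k=1}^{L}\phi^{NS}(K-k,k)$ explicitly and verify the defining identity $U_i(s_i,s_{-i})-U_i(t_i,s_{-i})=w_i\bigl(V(s_i,s_{-i})-V(t_i,s_{-i})\bigr)$ by telescoping over the only two possible unilateral deviations. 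Both arguments run on the identical engine --- the factorization $U_i^{a}(F,L)=w_i\,\phi^{a}(F,L)$ with $\phi^{a}$ independent of the player's identity, i.e.\ the anonymous congestion-game structure of the normalized game --- so neither is doing anything the other could not. What your version buys is that it is self-contained (it does not lean on the Appendix 4 computation) and it produces the potential function itself, which the paper only states separately, and only under the restrictive condition $R_ig_i=R_jg_j$ of the exact-potential theorem; what the paper's version buys is brevity given that the four-cycle identity has already been established. Your bookkeeping is consistent with the paper's convention that $F$ counts the deviating player when he senses, so the deviation $S\to NS$ indeed moves the aggregate from $(F,L)$ to $(F-1,L+1)$ and the telescoping closes as you claim.
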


The Proof is given in the Appendix 5.

\subsection{Equilibrium analysis}

First of all, note that since the game is finite (i.e., both the
number of players and the sets of actions are finite), the existence
of at least one mixed Nash equilibrium is guaranteed
\cite{nash-academy-50}. Now, since we know that the game is weighted
potential we know that there is at least one pure Nash equilibrium
\cite{Monderer-Shapley-1996}. Indeed, the following theorem holds.

\begin{theorem}
The equilibria of the above potential game is the set of maximizers
of the Rosenthal potential function \cite{Rosenthal73}.
\begin{eqnarray*}
&&\{S = (S_1,\ldots,S_K) | S\in NE\}
= \arg\max_{(F,L)} \Phi(F,L)\\
 &=& \arg\max_{(F,L)}\left[ \sum_{i=1}^F U(S,i,K-i) + \sum_{j=1}^L U(NS,K-j,j) \right]
\end{eqnarray*}
\end{theorem}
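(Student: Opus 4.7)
My approach combines three ingredients: the congestion structure of the sensing game (the utilities $U_i^S$ and $U_i^{NS}$ depend on the profile only through the counts $(F,L)$, up to the player-specific positive weight $w_i = R_i g_i/\sigma^2$); the weighted-potential property from Theorem 3, which aligns the signs of unilateral utility deviations with those of potential deviations; and a one-dimensional unimodality argument to lift local maxima to global maxima.

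First, I would rewrite the pure Nash-equilibrium condition directly in the count variables. At a profile with $F$ sensing and $L$ non-sensing players ($F+L=K$), the weight $w_i>0$ cancels from each best-response inequality, so the profile is a pure NE if and only if $U^S(F,L)\geq U^{NS}(F-1,L+1)$ whenever $F\geq 1$ (no sensing player wants to switch), and $U^{NS}(F,L)\geq U^S(F+1,L-1)$ whenever $L\geq 1$ (no non-sensing player wants to switch).

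Next, I would telescope the Rosenthal potential $\Phi(F,L)=\sum_{i=1}^{F}U^S(i,K-i)+\sum_{j=1}^{L}U^{NS}(K-j,j)$ to obtain the elementary one-step increments $\Phi(F,L)-\Phi(F-1,L+1)=U^S(F,L)-U^{NS}(F-1,L+1)$ and, analogously, $\Phi(F+1,L-1)-\Phi(F,L)=U^S(F+1,L-1)-U^{NS}(F,L)$. Matching these two identities with the NE inequalities from the previous paragraph yields the equivalence: $(F,L)$ is a pure NE if and only if $\Phi(F,L)\geq \Phi(F\pm 1,L\mp 1)$, i.e.\ $(F,L)$ is a local maximum of $\Phi$ on the one-dimensional lattice $\{F+L=K\}$.

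The remaining step, which I expect to be the main obstacle, is to lift these local maxima to global ones on the discrete interval $F\in\{0,1,\dots,K\}$. It suffices to prove that the discrete derivative $U^S(F+1,L-1)-U^{NS}(F,L)$ changes sign at most once as $F$ increases; once this single-crossing property is in hand, the set of local maxima of $\Phi$ coincides with $\arg\max_{(F,L)}\Phi(F,L)$ and the theorem follows. I would establish the single-crossing property by implicit differentiation of the defining equation $x(1-\epsilon_L x)f'(x)=f(x)$ for $\gamma^{\star}_L$ with respect to $L$, combined with the sigmoidal hypothesis on $f$ (which controls the sign of $xf'(x)-f(x)$), to track how the closed-form expressions for $U^S$ and $U^{NS}$ evolve as one player swaps action. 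The rest of the argument is purely structural and relies only on the potential-game machinery already invoked.
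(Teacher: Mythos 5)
Your reduction is sound as far as it goes: rewriting the best-response conditions in the count variables $(F,L)$, telescoping $\Phi$ to obtain $\Phi(F,L)-\Phi(F-1,L+1)=U^S(F,L)-U^{NS}(F-1,L+1)$, and concluding that the pure Nash equilibria are exactly the local maxima of $\Phi$ along the line $F+L=K$ is correct, and it is essentially the content of Rosenthal's argument, which the paper invokes in a single sentence as its entire proof. This part already yields the inclusion $\arg\max_{(F,L)}\Phi(F,L)\subseteq NE$.

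However, the theorem asserts an equality of sets, and the step that would deliver the reverse inclusion --- your single-crossing claim that the increment $U^S(F+1,L-1)-U^{NS}(F,L)$ changes sign at most once as $F$ increases --- is only announced, not proven. As written, your argument establishes that $NE$ equals the set of \emph{local} maximizers of $\Phi$ on the discrete interval, not the set of global maximizers. The route you sketch (implicit differentiation of $x(1-\epsilon_L x)f'(x)=f(x)$ with respect to $L$, combined with sigmoidness of $f$) is plausible but far from automatic: $\gamma^{\star}_L$ enters the utilities both through $f(\gamma^{\star}_L)$ and through the rational prefactors, and unimodality of the increment does not follow from sigmoidness alone without an explicit computation. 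To be fair, the paper does not close this gap either: its proof is a bare citation to Rosenthal, whose theorem guarantees a pure equilibrium at a maximizer of the potential, i.e., only one of the two inclusions. You have correctly located the missing piece that would justify the stated equality, but your proposal does not supply it.
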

The proof follows directly the one of Rosenthal's theorem \cite{Rosenthal73}.

We may restrict our attention to pure and mixed Nash equilibria.
However, as it will be clearly seen in the 2-player case study (Sec.
\ref{subsec:2playerSensing}), this may pose a problem of fairness. This is the main reason
why we study the set of correlated equilibria of the sensing game.
We introduce the concept of correlated equilibrium \cite{Aum74}
in order to enlarge the set of equilibrium utilities.
Every utility vector inside the convex hull of the
equilibrium utilities is a correlated equilibrium. The
convexification property of the correlated equilibrium allow the
system to better chose an optimal sensing. The concept of correlated
equilibrium is a generalization of the Nash equilibrium. It consist
in the stage game $G$ extended with a signalling structure $\Gamma$.
A correlated equilibrium (CE) of a stage game correspond to a Nash
equilibrium (NE) of the same game extended with an adequate
signalling structure $\Gamma$. A canonical correlated equilibrium is
a probability distribution $Q\in \Delta(A)$, $A=A_1\times...\times
A_K$ over the action product of the players that satisfy some
incentives conditions.

\begin{definition}
A probability distribution $Q\in \Delta(A)$ is a canonical
correlated equilibrium if for each player $i$, for each action
$a_i\in A_i$ that satisfies $Q(a_i)>0$ we have:
\begin{eqnarray*}\label{CorrelatedLinearInequlity}
\sum_{a_{-i} \in A_{-i}}&& Q(a_{-i} \mid a_i)u_i(a_i,a_{-i})\\
&&\geq \sum_{a{-i} \in A_{-i}}Q(a_{-i} \mid a_i)u_i(b_i,a_{-i}),\\
&&\quad \forall b_i\in A_i
\end{eqnarray*}
\end{definition}
The result of Aumann 1987 \cite{Aum87} states that for any correlated
equilibrium, it correspond a canonical correlated equilibrium.
\begin{theorem}[Aumann 1987, prop. 2.3 \cite{Aum87}]\label{theorem:correlated}
The utility vector $u$ is a correlated equilibrium utility if and
only if there exists a distribution $Q\in \Delta(A)$ satisfying the
linear inequality contraint \ref{CorrelatedLinearInequlity} with $u=
E_Q U$.
\end{theorem}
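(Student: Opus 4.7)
The plan is to prove the two directions separately, following the classical argument of Aumann. For the \emph{if} direction (sufficiency), I would construct explicitly the signalling structure $\Gamma$ associated with a given $Q \in \Delta(A)$. Introduce a trusted mediator whose private signal space for player $i$ is $T_i = A_i$ and whose joint signal distribution is $Q$ itself. The mediator draws an action profile $a = (a_1,\ldots,a_K) \sim Q$ and privately recommends $a_i$ to player $i$. I would then show that the strategy profile in which every player $i$ follows the recommendation, $\sigma_i(a_i) = a_i$, is a Nash equilibrium of the extended game $(G,\Gamma)$. Indeed, conditional on receiving $a_i$, the posterior distribution over the recommendations sent to the other players is exactly $Q(\,\cdot \mid a_i)$, so the incentive inequality (\ref{CorrelatedLinearInequlity}) states precisely that playing $a_i$ dominates any pure deviation $b_i \in A_i$; by linearity of expectation this extends to mixed deviations as well. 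The induced expected utility vector is $E_Q U = u$.

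For the \emph{only if} direction (necessity), I would start from an arbitrary signalling structure $\Gamma = ((T_i)_{i\in K}, \mu)$, equilibrium strategies $\sigma_i : T_i \to \Delta(A_i)$, and show that the induced distribution on $A$ is a canonical correlated equilibrium with the same utility vector. Concretely, define $Q \in \Delta(A)$ by
\begin{equation*}
Q(a) = \sum_{t \in T} \mu(t) \prod_{i\in K} \sigma_i(a_i \mid t_i),
\end{equation*}
i.e., $Q$ is the push-forward of $\mu$ by the randomised strategy profile $\sigma$. By construction $u = E_Q U$. To verify the canonical incentive constraint, fix a player $i$ and an action $a_i$ with $Q(a_i) > 0$. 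Because $\sigma_i$ is a best response in the extended game, almost every signal $t_i$ in $\sigma_i^{-1}(a_i)$ must satisfy the pointwise best-response condition against the conditional distribution of $a_{-i}$ given $t_i$; averaging this condition with respect to the posterior $\mathbb{P}(t_i \mid a_i)$ yields precisely (\ref{CorrelatedLinearInequlity}) for the deviation $b_i$.

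The only delicate step is the averaging in the necessity direction: one must check that the "pointwise" best-response inequality that holds for almost every signal $t_i$ survives conditioning on the event $\{\sigma_i(t_i) = a_i\}$. This is routine once one notes that the conditional distribution of $a_{-i}$ given $a_i$ under $Q$ is a mixture, weighted by $\mathbb{P}(t_i \mid a_i)$, of the conditional distributions of $a_{-i}$ given $t_i$ under $(\mu,\sigma_{-i})$; linearity of $u_i$ in the distribution of the opponents' actions then transfers the inequality. Restricting attention to pure deviations $b_i \in A_i$ is without loss of generality since in a finite game a mixed best response must place mass only on pure best responses. This establishes the equivalence and justifies restricting the analysis of the sensing game to canonical correlated equilibria, which is what is used in the sequel for the $2$-player case.
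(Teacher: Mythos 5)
Your proposal is correct, but there is nothing in the paper to compare it against: the authors state this theorem as an imported result (Aumann 1987, prop.~2.3, \cite{Aum87}) and give no proof of it, relying entirely on the citation. What you have written is the standard revelation-principle argument from the literature, and both directions are sound. The sufficiency direction (mediator with $T_i = A_i$, recommendation drawn from $Q$, obedience as a Nash equilibrium of the extended game) is exactly right, including the observation that the incentive constraints for pure deviations extend to mixed and to signal-contingent deviations by linearity. The necessity direction (push-forward of $\mu$ through $\sigma$, then averaging the pointwise best-response conditions over the posterior $\mathbb{P}(t_i \mid a_i)$) is also correct; the mixture identity you invoke rests on the conditional independence of $a_i$ and $a_{-i}$ given $t_i$, which holds because each player randomizes only on her own signal, and you would do well to say so explicitly. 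Two minor points of hygiene: the notation $\sigma_i^{-1}(a_i)$ should be replaced by the set of signals $t_i$ with $\sigma_i(a_i \mid t_i) > 0$, since $\sigma_i$ is randomized; and the claim that every action in the support of an equilibrium mixed strategy is a pure best response (which you use for the support of $\sigma_i(\cdot \mid t_i)$, not only for the deviations $b_i$) deserves one sentence of justification, namely that otherwise player $i$ could profitably reassign that mass. In the finite setting of the sensing game, ``almost every'' simply means ``every signal of positive probability,'' so no measure-theoretic care is needed.
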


The convexification property of the correlated equilibrium allow the
system to better chose an optimal sensing. Denote $E$ the set of
pure or mixed equilibrium utility vectors and $\text{Conv }E$ the
convex hull of the set $E$.

\begin{theorem}
Every utility vector $u\in  \text{Conv }E$ is a correlated
equilibrium utility of the sensing game.
\end{theorem}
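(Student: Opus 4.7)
The plan is to use Aumann's theorem (Theorem 3.5 in the paper) and construct, for each utility vector in $\text{Conv}\, E$, an explicit canonical correlated distribution $Q$ whose expected utility equals that vector and which satisfies the linear incentive inequalities of the correlated equilibrium definition.

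First I would pick an arbitrary $u \in \text{Conv}\, E$ and write it as a finite convex combination $u = \sum_{k} \lambda_k u^{(k)}$ where $\lambda_k \geq 0$, $\sum_k \lambda_k = 1$, and each $u^{(k)}$ is the utility vector of some (pure or mixed) Nash equilibrium $\sigma^{(k)} = (\sigma_1^{(k)}, \ldots, \sigma_K^{(k)})$ of the sensing game $G$. I would then define the candidate distribution on $A = A_1 \times \cdots \times A_K$ by
\begin{eqnarray*}
Q(a) = \sum_{k} \lambda_k \prod_{i\in K} \sigma_i^{(k)}(a_i).
\end{eqnarray*}
Intuitively, $Q$ corresponds to the signaling device that first draws the index $k$ with probability $\lambda_k$ and then recommends to the players to play the corresponding Nash equilibrium $\sigma^{(k)}$ independently.

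Next I would check that $E_Q U_i = u_i$ for every $i\in K$: by linearity of expectation and independence of the $\sigma_i^{(k)}$ across $i$, we get $E_Q U_i = \sum_k \lambda_k E_{\sigma^{(k)}} U_i = \sum_k \lambda_k u_i^{(k)} = u_i$. The only real step is to verify the incentive constraint of Definition (the one labeled CorrelatedLinearInequality). Fix player $i$ and an action $a_i$ with $Q(a_i)>0$. Multiplying through by $Q(a_i)$, the constraint becomes
\begin{eqnarray*}
\sum_{a_{-i}} Q(a_i,a_{-i}) U_i(a_i,a_{-i}) \;\geq\; \sum_{a_{-i}} Q(a_i,a_{-i}) U_i(b_i,a_{-i}), \quad \forall b_i \in \mathcal{S}_i.
\end{eqnarray*}
Plugging in the definition of $Q$ and exchanging summations, the left-hand side equals $\sum_k \lambda_k \sigma_i^{(k)}(a_i) U_i(a_i, \sigma_{-i}^{(k)})$ and the right-hand side equals $\sum_k \lambda_k \sigma_i^{(k)}(a_i) U_i(b_i, \sigma_{-i}^{(k)})$.

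The key step, which I expect to be the least automatic, is the term-by-term comparison: whenever $\sigma_i^{(k)}(a_i) > 0$, the action $a_i$ lies in the support of player $i$'s equilibrium strategy under $\sigma^{(k)}$, so by the Nash property $a_i$ is a best response against $\sigma_{-i}^{(k)}$ and hence $U_i(a_i, \sigma_{-i}^{(k)}) \geq U_i(b_i, \sigma_{-i}^{(k)})$ for every $b_i$; when $\sigma_i^{(k)}(a_i)=0$ the weight is zero and the inequality is trivial. Summing the weighted inequalities over $k$ yields the desired constraint. This proves that $Q$ is a canonical correlated equilibrium distribution with $E_Q U = u$, and Theorem 3.5 (Aumann 1987) concludes that $u$ is a correlated equilibrium utility of the sensing game. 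Note the argument uses only the Nash property of each $\sigma^{(k)}$, not any specific feature of the sensing game, so the result is really the general fact that the convex hull of Nash equilibrium payoffs is contained in the set of correlated equilibrium payoffs.
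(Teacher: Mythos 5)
Your proof is correct and follows the same route the paper intends: the paper merely asserts that any convex combination of Nash equilibria is a correlated equilibrium, while you supply the standard construction (the lottery over equilibrium indices inducing $Q(a)=\sum_k \lambda_k \prod_i \sigma_i^{(k)}(a_i)$, with the incentive constraints verified via the best-response property of actions in the support of each $\sigma_i^{(k)}$). This fills in exactly the details the paper omits, so no discrepancy in approach.
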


Any convex combination of Nash equilibria is a correlated
equilibrium. As example, let $(\underline{U}_j)_{j\in J}$ a family
of equilibrium utilities and $(\lambda_j)_{j\in J}$ a family of
positive parameters with  $\sum_{j\in J}\lambda_j=1$ such that:
\begin{eqnarray}
\underline{U} &=&  \sum_{j\in J}\lambda_j \underline{U}_j
\end{eqnarray}
Then $\underline{U}$ is a correlated equilibrium utility vector.

\section{Detailed analysis for the 2-player case}

\subsection{The 2-player hybrid power control game}


In the previous section, we consider the sensing game as if the
players do not chose their own power control policy. Indeed, when a
player chooses to sense, he cannot choose its own power control
because, it would depend on whether the other transmitters sense or
not.  We investigate the case where the players are choosing their
sensing and power control policy in a joint manner. It enlarges the
set of actions of the sensing game and it turns that, as a
Braess-type paradox, that the set of equilibria is dramatically
reduced. The sensing game with power control has a stricly dominated
strategy: the sensing strategy. It implies that the equilibria of
such a game boils down to the Nash equilibrium without sensing.

We consider that the action set for player $i$ consists in choosing
to sense or not and the transmit power level. The action set of
player $i$ writes :
\begin{eqnarray}
A_i=  \{S_i,NS_i\}\times [0,\bar{P_i}]
\end{eqnarray}

Before to characterize the set of equilibria of such a game, remark
that the two pure equilibria of the previous matrix game are no
longer equilibria. Indeed, assume that player 2 will not sense its
environment and transmit  using the leading power $p_2^L$. Then
player 1 best response would be to play the following transmit power
$p_1^F$ as for the classical Stackelberg equilibrium. Nevertheless
in the above formulation, the player 1 has a sensing cost $\alpha$
that correspond to the fraction of time to sense its environment. In
this context, player 1 is incited to play the following transition
power without sensing. The strategy $(S_1,p_1^F)$ and $(NS_2,p_1^L)$
is not an equilibrium of the game with Discrete and Compact Action
Set.

\begin{theorem}
The unique Nash equilibrium of the Power Control and Sensing Game is
the Nash equilibrium without sensing.
\end{theorem}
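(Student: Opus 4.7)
The plan is to show that the sensing component of the action is strictly dominated, reducing the hybrid game to the classical one--shot power control game $\mc{G}$, and then invoke the known uniqueness of its non--saturated Nash equilibrium given by (\ref{eq:NE-power}).

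The key observation is that in the hybrid game the pair $(d_i,p_i)$ is chosen simultaneously: the transmit power $p_i$ is committed \emph{at the same time} as the sensing decision $d_i$, so the act of sensing cannot feed back into the chosen power. Consistently with the Stackelberg model recalled in Sec.~2.2, choosing to sense costs a fraction $\alpha\in(0,1)$ of the block, weighting the effective rate by $(1-\alpha)$. Thus, for any opponent profile $(d_{-i},p_{-i})$ and any $p_i\in[0,\bar{P_i}]$, the resulting SINR at receiver $i$ depends only on the powers, and the utility writes
\begin{equation*}
\tilde u_i\bigl((d_i,p_i),(d_{-i},p_{-i})\bigr)=
\bigl(1-\alpha\,\mathbf{1}_{\{d_i=S_i\}}\bigr)\,\frac{R_i\,f(\mathrm{SINR}_i)}{p_i}.
\end{equation*}
So the first step is to fix an arbitrary action of player $-i$ and an arbitrary power $p_i$, and compare $(S_i,p_i)$ with $(NS_i,p_i)$: since $\alpha>0$ and $f(\mathrm{SINR}_i)/p_i>0$ as soon as $p_i>0$, one gets the strict inequality $\tilde u_i((NS_i,p_i),\cdot)>\tilde u_i((S_i,p_i),\cdot)$. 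This shows that for every mixed strategy of $-i$, the pure action $(S_i,p_i)$ is strictly dominated by $(NS_i,p_i)$; equivalently, the sensing component $S_i$ is a strictly dominated strategy.

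The second step is elimination: any Nash equilibrium must put zero probability on strictly dominated actions, so in every equilibrium both players choose $NS$ with probability one. Conditioning on $(NS_1,NS_2)$, the restricted game on $[0,\bar P_1]\times[0,\bar P_2]$ coincides exactly with the one--shot energy--efficient power control game $\mc{G}$ reviewed in Sec.~2.1, whose utility is $u_i(p_1,p_2)=R_i f(\mathrm{SINR}_i)/p_i$. The third step is to invoke the result recalled after Definition~2.1: under the sigmoidal assumption on $f$, the unique non--saturated Nash equilibrium of $\mc{G}$ is given by (\ref{eq:NE-power}) with $\beta^\star$ the unique positive root of $xf'(x)-f(x)=0$. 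Combining the two steps yields a unique Nash equilibrium of the hybrid game, namely $(NS_i,p_i^\star)_{i=1,2}$.

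The only real subtlety I would need to treat carefully is the definition of the hybrid utility (not spelled out in the statement): one has to justify that the $(1-\alpha)$ weight applies whenever $d_i=S_i$ regardless of the opponent's choice, which is precisely the convention used for followers in Sec.~2.2. Once that is granted, the rest is just a one--line strict--dominance argument plus the already--known uniqueness in $\mc{G}$; no fixed--point analysis of the hybrid game is needed. The paradoxical flavour of the theorem comes entirely from comparing this unique outcome with the equilibria of the sensing game of Sec.~3, where the power profile is tied mechanically to the sensing decision and cannot be decoupled.
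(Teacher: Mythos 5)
Your proposal is correct and follows essentially the same route as the paper's own proof: the paper likewise argues that $(S_i,p_i)$ is dominated by $(NS_i,p_i)$ because sensing only incurs the cost $\alpha$ without being able to influence the simultaneously committed power, so every equilibrium lies in the no-sensing restriction, which is exactly the one-shot game $\mc{G}$ with its unique equilibrium (\ref{eq:NE-power}). Your version merely spells out the hybrid utility and the strict-dominance step more explicitly than the paper's one-paragraph argument.
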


\begin{proof}
This result comes from the cost of sensing activity. Indeed, the
strategy $(S_1,p_1)$ is always dominated by the strategy
$(NS_1,p_1)$. It turns out that the sensing is a dominated actions
for both players 1 and 2. Thus every equilibria is of the form
$(NS_1,p_1)$, $(NS_2,p_2)$ with the reduced action spaces $p_1\in
[0,\bar{P_1}]$ and $p_2\in [0,\bar{P_2}]$. The previous analysis
applies in that case, showing that the unique Nash equilibrium of
the Power Control and Sensing Game is the Nash of the game without
sensing $(p_1^*,p_2^*)$.
\end{proof}

As a conclusion, we see that letting the choice to the transmitters
to choose jointly their discrete and continuous actions lead to a
performance which is less than the one obtained by choosing his
discrete action first, and then choosing his continuous action. This
is the reason why we assume, from now on, the existence of a
mechanism imposing this order in the decision taking.

\subsection{The 2-player sensing game}\label{subsec:2playerSensing}

We consider the following two players-two strategies matrix game where
players 1 and 2 choose to sense the channel (action $S$) or not
(action $NS$) before transmitting his data. We denote by $x_i$ the
 mixed strategy of user $i$, that is the probability that user $i$
 takes action $S$ (sense the channel). Sensing activity provide the
  possibility to play as a follower, knowing in advance the action
  of the leaders.  Let $\alpha$ denote the sensing cost, we compare
  the strategic behavior of sensing by considering the equilibrium
   utilities at the Nash and at the Stackelberg equilibria as payoff functions.

\begin{figure}[!ht]
\begin{center}
\begin{tiny}
\psset{xunit=0.5cm,yunit=0.2cm}
\begin{pspicture}(0,-1)(19,19)
\psframe(0,0)(16,16)
\psline(0,8)(16,8)
\psline(8,0)(8,16)
\rput(3,13.3){$\frac{R_1g_1f(\beta^*)(1-\beta^*)}{\sigma^2\beta^*}, $}
\rput(5.5,10.6){$\frac{R_2g_2f(\beta^*)(1-\beta^*)}{\sigma^2\beta^*}$}
\rput(10.6,13.3){$\frac{R_1g_1f(\gamma^*)(1-\gamma^*\beta^*)}{\sigma^2\gamma^*(1+\beta^*)}, $}
 \rput(12.6,10.3){$(1-\alpha)\frac{R_2g_2f(\beta^*)(1-\gamma^*\beta^*)}{\sigma^2\beta^*(1+\gamma^*)}$}
\rput(3.5,5.8){$(1-\alpha)\frac{R_1g_1f(\beta^*)(1-\gamma^*\beta^*)}{\sigma^2\beta^*(1+\gamma^*)},  $}
 \rput(5.5,3){$\frac{R_2g_2f(\gamma^*)(1-\gamma^*\beta^*)}{\sigma^2\gamma^*(1+\beta^*)}$}
\rput(11,5.8){$(1-\alpha)\frac{R_1g_1f(\beta^*)(1-\beta^*)}{\sigma^2\beta^*}, $}
\rput(13,3){$(1-\alpha)\frac{R_2g_2f(\beta^*)(1-\beta^*)}{\sigma^2\beta^*}$}
\rput(-0.6,4){$S_1$}
\rput(-0.6,12){$NS_1$}
\rput(4,17){$NS_2$}
\rput(12,17){$S_2$}
\end{pspicture}
\end{tiny}
\caption{The Utility Matrix of the Two-Player Sensing Game.}
\end{center}
\label{figure:utilMatrix}
\end{figure}

The equilibria of this game are strongly related to the sensing parameter $\alpha$.

\begin{theorem}
The matrix game has three equilibria if and only if
\begin{eqnarray}
\alpha<\frac{\beta^* - \gamma^*}{1-\beta^*\gamma^*}
\end{eqnarray}
\end{theorem}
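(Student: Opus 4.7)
The plan is to run the standard four cell-by-cell best-response checks in the $2\times 2$ utility matrix. Abbreviate the entries as $a_i=\frac{R_ig_i f(\beta^*)(1-\beta^*)}{\sigma^2\beta^*}$ for $(NS,NS)$, $b_i=\frac{R_ig_i f(\gamma^*)(1-\gamma^*\beta^*)}{\sigma^2\gamma^*(1+\beta^*)}$ for the leader cell, $c_i=(1-\alpha)\frac{R_ig_i f(\beta^*)(1-\gamma^*\beta^*)}{\sigma^2\beta^*(1+\gamma^*)}$ for the follower cell, and $d_i=(1-\alpha)a_i$ for $(S,S)$. The no-unilateral-deviation conditions then read as follows: $(NS,NS)$ is a pure NE iff $a_i\geq c_i$ for $i=1,2$; $(S,S)$ is a pure NE iff $d_i\geq b_i$; and each antidiagonal cell $(S_i,NS_j)$ is a pure NE iff $c_i\geq a_i$ for the sensing player together with $b_j\geq d_j$ for the leader.

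The key algebraic step is to simplify $c_i\geq a_i$: cancelling $R_ig_i f(\beta^*)/\sigma^2$ and rearranging gives $(1-\alpha)(1-\gamma^*\beta^*)\geq(1-\beta^*)(1+\gamma^*)$, which after expansion and cancellation collapses exactly to $\alpha\leq\frac{\beta^*-\gamma^*}{1-\beta^*\gamma^*}$. In parallel, the Stackelberg review of Section~II.B supplies $b_i\geq a_i$, since the leader can always commit to his Nash action and at least secure his Nash utility; for $\alpha>0$ this yields $b_i>(1-\alpha)a_i=d_i$, which simultaneously rules out $(S,S)$ as an equilibrium and discharges the extra inequality $b_j\geq d_j$ in the antidiagonal cells automatically.

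Combining these facts, when $\alpha<\frac{\beta^*-\gamma^*}{1-\beta^*\gamma^*}$ the two antidiagonal cells are pure NE while $(NS,NS)$ and $(S,S)$ are not, and a standard indifference computation on the $2\times 2$ bimatrix produces a third, fully mixed equilibrium $(x_1^*,x_2^*)$ in which each $x_i^*\in(0,1)$ is pinned down by the opponent's indifference condition between $S$ and $NS$; strict inequality of $\alpha$ below the threshold guarantees an interior solution. Conversely, when $\alpha\geq\frac{\beta^*-\gamma^*}{1-\beta^*\gamma^*}$, action $NS$ (weakly) dominates $S$ for each player and $(NS,NS)$ is the only equilibrium, giving the ``only if'' direction. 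The main obstacle is purely algebraic, namely verifying that $c_i\geq a_i$ collapses to the clean threshold $\frac{\beta^*-\gamma^*}{1-\beta^*\gamma^*}$; the conceptual content reduces to importing the one Stackelberg inequality $b_i\geq a_i$, which simultaneously kills $(S,S)$ and simplifies each antidiagonal cell's equilibrium condition to a single inequality on the sensing player.
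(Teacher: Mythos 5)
Your proof is correct and follows essentially the same route as the paper: the same cell-by-cell best-response check on the $2\times 2$ matrix, the same key equivalence $\alpha<\frac{\beta^*-\gamma^*}{1-\beta^*\gamma^*}\Longleftrightarrow c_i>a_i$ (the paper's Appendix 1), the Stackelberg fact $b_i\geq a_i$ to dispose of $(S,S)$ and of the leader-side condition in the antidiagonal cells (which the paper uses only implicitly), and the indifference principle for the third, completely mixed equilibrium (the paper's Appendix 2). The one imprecision is in your converse: at $\alpha=\frac{\beta^*-\gamma^*}{1-\beta^*\gamma^*}$ the domination of $NS$ is only weak and the game then has infinitely many equilibria (as the paper's subsequent corollary records), not just $(NS,NS)$ --- though this still gives ``not three,'' so the equivalence is unaffected.
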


Let us characterize the three equilibria.
From Appendix 1, is it easy to see that :
\begin{eqnarray*}
&&\alpha<\frac{\beta^* - \gamma^*}{1-\beta^*\gamma^*} \Longleftrightarrow \\ &&(1-\alpha)\frac{R_1g_1f(\beta^*)(1-\gamma^*\beta^*)}{\sigma^2\beta^*(1+\gamma^*)}>\frac{R_1g_1f(\beta^*)(1-\beta^*)}{\sigma^2\beta^*}
\end{eqnarray*}
We conclude that the joint actions $(NS_1,NS_2)$ and $(S_1,S_2)$ are not Nash Equilibria:
\begin{eqnarray}
U_1 (NS_1,NS_2) < U_1 (S_1,NS_2)\\
U_2 (NS_1,NS_2) < U_2 (NS_1,S_2)\\
U_1 (S_1,S_2) < U_1 (NS_1,S_2)\\
U_2 (S_1,S_2) < U_2 (S_1,NS_2)
\end{eqnarray}

The sensing parameter determines which one of the two options is optimal between leading and following.
\begin{corollary}
Following is better than leading if and only if
\begin{eqnarray}
\alpha<\frac{f(\beta^*) - f(\gamma^*) + \frac{f(\beta^*)}{\beta^*} - \frac{f(\gamma^*)}{\gamma^*}  }{f(\beta^*)\frac{1+\beta^*}{\beta^*}}
\end{eqnarray}
\end{corollary}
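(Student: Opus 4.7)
The plan is a direct algebraic comparison of the two entries of the matrix in Figure~1 that correspond, for player~1 (say), to the ``following'' and ``leading'' roles, followed by isolation of $\alpha$. Concretely, the payoff when player~1 follows is read off the cell $(S_1,NS_2)$,
$$U_1^F = (1-\alpha)\,\frac{R_1 g_1 f(\beta^*)(1-\gamma^*\beta^*)}{\sigma^2\,\beta^*(1+\gamma^*)},$$
and the payoff when player~1 leads is read off the cell $(NS_1,S_2)$,
$$U_1^L = \frac{R_1 g_1 f(\gamma^*)(1-\gamma^*\beta^*)}{\sigma^2\,\gamma^*(1+\beta^*)}.$$
The condition ``following is better than leading'' is simply $U_1^F > U_1^L$. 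By symmetry of the matrix the same inequality is obtained for player~2, which is reassuring since the final bound does not involve the user-specific factor $R_i g_i/\sigma^2$.

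The next step is to cancel common positive factors. The quantities $R_1,g_1,\sigma^2,\beta^*,\gamma^*,f(\beta^*),f(\gamma^*)$ are positive by construction, and the factor $1-\gamma^*\beta^*$ appearing in both payoffs is also positive: it is the $K=2$ specialization of the term $1-(K-1)\gamma^*\beta^*-(K-2)\beta^*$ that appears in the Stackelberg power levels recalled in Section~2, whose positivity is precisely what guarantees admissible $p_i^L$ and $p_i^F$. Dividing both sides of $U_1^F > U_1^L$ by $R_1 g_1 (1-\gamma^*\beta^*)/\sigma^2$ reduces the inequality to
$$(1-\alpha)\,\frac{f(\beta^*)}{\beta^*(1+\gamma^*)} \;>\; \frac{f(\gamma^*)}{\gamma^*(1+\beta^*)}.$$

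The remaining step is to isolate $\alpha$ and rewrite the bound in the form stated. Rearranging,
$$\alpha \;<\; 1-\frac{\beta^*(1+\gamma^*)f(\gamma^*)}{\gamma^*(1+\beta^*)f(\beta^*)} \;=\; \frac{\gamma^*(1+\beta^*)f(\beta^*)-\beta^*(1+\gamma^*)f(\gamma^*)}{\gamma^*(1+\beta^*)f(\beta^*)}.$$
Expanding the numerator as $\gamma^* f(\beta^*)+\beta^*\gamma^* f(\beta^*)-\beta^* f(\gamma^*)-\beta^*\gamma^* f(\gamma^*)$ and then dividing both numerator and denominator by $\beta^*\gamma^*$ produces exactly
$$\alpha \;<\; \frac{f(\beta^*)-f(\gamma^*)+\dfrac{f(\beta^*)}{\beta^*}-\dfrac{f(\gamma^*)}{\gamma^*}}{f(\beta^*)\,\dfrac{1+\beta^*}{\beta^*}}.$$

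No step is a genuine obstacle; the whole argument is a routine algebraic rearrangement of the matrix entries. The only technical point worth flagging is the positivity of $1-\gamma^*\beta^*$, which is what lets us cancel that common factor without flipping the inequality; everything else follows from elementary manipulation.
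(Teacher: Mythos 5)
Your proof is correct and follows essentially the same route as the paper's Appendix~3: compare the follower payoff in cell $(S_1,NS_2)$ with the leader payoff in cell $(NS_1,S_2)$, cancel the common positive factor $R_1g_1(1-\gamma^*\beta^*)/\sigma^2$, and rearrange the resulting inequality into the stated bound on $\alpha$ (you run the chain of equivalences from the utility comparison to the $\alpha$-bound, the paper runs it the other way). In fact your version is the cleaner one: the paper's appendix contains typographical slips (its second line should have numerator $f(\gamma^*)\frac{1+\gamma^*}{\gamma^*}$ rather than the difference, and its last line swaps the factors $1+\beta^*$ and $1+\gamma^*$ relative to the actual follower/leader utilities), whereas your algebra reproduces the corollary exactly, including the correctly flagged positivity of $1-\gamma^*\beta^*$.
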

The proof is given in Appendix 3.

The above matrix game has two pure equilibria $(NS_1,S_2)$ and $(S_1,NS_2)$. There is also a completely mixed equilibrium we compute using the indifference principle.
Let $(x,1-x)$ a mixed strategy of player 1 and $(y,1-y)$ a mixed strategy of player 2. We aim at characterize the optimal joint mixed strategy $(x^*,y^*)$ satisfying the indifference principle (see Appendix 2 for more details).
\begin{figure*}[!ht]
\begin{eqnarray*}
x^*=y^* = \frac{(1-\alpha)\frac{f(\beta^*)}{\beta^*}(1-\beta^*) - \frac{f(\gamma^*)}{\gamma^*} \frac{1- \gamma^*\beta^*}{ 1+ \beta^*}  }{(1-\alpha)\frac{f(\beta^*)}{\beta^*}(1-\beta^*) - \frac{f(\gamma^*)}{\gamma^*} \frac{1- \gamma^*\beta^*}{ 1+ \beta^*}   + \frac{f(\beta^*)}{\beta^*}(1-\beta^*) - (1-\alpha) \frac{f(\beta^*)}{\beta^*} \frac{1- \gamma^*\beta^*}{ 1+ \gamma^*} }
\end{eqnarray*}
\end{figure*}
The above joint mixed strategy $(x^*,1-x^*)$ and $(y^*,1-y^*)$ is an equilibrium strategy. The corresponding utilities are computed in Appendix 2. and writes with $\Delta$ defined in(\ref{Equation:Delta}).
\begin{eqnarray*}
U_1(x^*,y^*) &=& \frac{R_1g_1}{\sigma^2} \Delta\\
U_2(x^*,y^*) &=& \frac{R_2g_2}{\sigma^2} \Delta
\end{eqnarray*}
\begin{figure*}[!ht]\label{Equation:Delta}
\begin{eqnarray*}
\Delta = \frac{(1-\alpha)\frac{f(\beta^*)}{\beta^*}(1-\beta^*)\frac{f(\beta^*)}{\beta^*}(1-\beta^*) - \frac{f(\gamma^*)}{\gamma^*} \frac{1- \gamma^*\beta^*}{ 1+ \beta^*} (1-\alpha) \frac{f(\beta^*)}{\beta^*} \frac{1- \gamma^*\beta^*}{ 1+ \gamma^*}
}{(1-\alpha)\frac{f(\beta^*)}{\beta^*}(1-\beta^*) - \frac{f(\gamma^*)}{\gamma^*} \frac{1- \gamma^*\beta^*}{ 1+ \beta^*}   + \frac{f(\beta^*)}{\beta^*}(1-\beta^*) - (1-\alpha) \frac{f(\beta^*)}{\beta^*} \frac{1- \gamma^*\beta^*}{ 1+ \gamma^*} }\\
\end{eqnarray*}
\end{figure*}


The equilibrium utilities are represented on the following figure. The two pure Nash equilibrium utilities are represented by a circle whereas the mixed Nash utility is represented by a square.
\begin{figure}[ht]
\begin{center}
\psset{xunit=1.5cm,yunit=1.5cm}
\begin{pspicture}(-1,-1)(5,5)
\pscustom[fillstyle=solid,fillcolor=gray]{
\psline(0.2,0.2)(4,2)
\pscurve(4,2)(1.7,1.7)(2,4)
}
\psline[arrows=->](0,-1)(0,5)
\psline[arrows=->](-1,0)(5,0)
\psdots(4,2)(2,4)(1.7,1.7)(0.2,0.2)(0.5,0.5)
\pscircle(2,4){0.1}
\pscircle(4,2){0.1}
\psframe(1.65,1.65)(1.75,1.75)
\psline(0.2,0.2)(4,2)
\psline(0.2,0.2)(2,4)
\pscurve(4,2)(1.7,1.7)(2,4)
\uput[l](0,1.7){$U_2(x^*,y^*)$}
\uput[l](0,2){$U_2(S_1,NS_2)$}
\uput[l](0,4){$U_2(NS_1,S_2)$}
\uput[l](0,0.2){$U_2(S_1,S_2)$}
\uput[l](0,0.5){$U_2(NS_1,NS_2)$}
\uput[d](1.7,0){$U_1(x^*,y^*)$}
\uput[d](2.1,-0.3){$U_1(S_1,NS_2)$}
\uput[d](4,0){$U_1(NS_1,S_2)$}
\uput[d](0.2,0){$U_1(S_1,S_2)$}
\uput[d](0.7,-0.3){$U_1(NS_1,NS_2)$}
\psline[linestyle=dashed,linewidth=0.5pt](2,0)(2,4)
\psline[linestyle=dashed,linewidth=0.5pt](0,4)(2,4)
\psline[linestyle=dashed,linewidth=0.5pt](4,0)(4,2)
\psline[linestyle=dashed,linewidth=0.5pt](0,2)(4,2)
\psline[linestyle=dashed,linewidth=0.5pt](1.7,0)(1.7,1.7)
\psline[linestyle=dashed,linewidth=0.5pt](0,1.7)(1.7,1.7)
\psline[linestyle=dashed,linewidth=0.5pt](0.2,0)(0.2,0.2)
\psline[linestyle=dashed,linewidth=0.5pt](0,0.2)(0.2,0.2)
\psline[linestyle=dashed,linewidth=0.5pt](0.5,0)(0.5,0.5)
\psline[linestyle=dashed,linewidth=0.5pt](0,0.5)(0.5,0.5)
\label{fig:EqandFeasibleUtilities}
\end{pspicture}
\caption{The Equilibrium and Feasible Utilities.}
\end{center}
\end{figure}
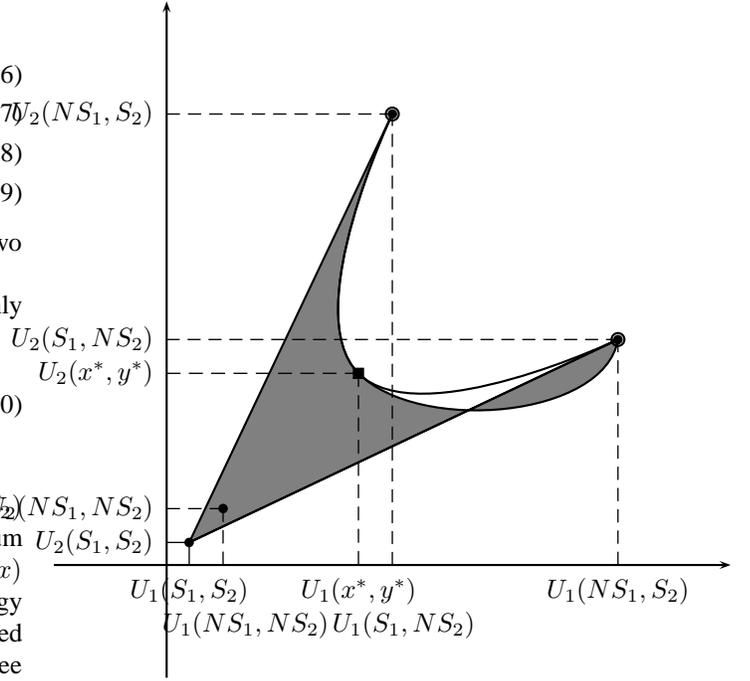

We also provide a characterization of the equilibria for the cases where $\alpha$
is greater or equal than $\frac{\beta^* - \gamma^*}{1-\beta^*\gamma^*}$.
\begin{corollary}
The matrix game has a unique equilibrium if and only if
\begin{eqnarray}
\alpha>\frac{\beta^* - \gamma^*}{1-\beta^*\gamma^*}
\end{eqnarray}
It has a infinity of equilibria if and only if
\begin{eqnarray}
\alpha=\frac{\beta^* - \gamma^*}{1-\beta^*\gamma^*}
\end{eqnarray}
\end{corollary}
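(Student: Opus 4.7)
The plan is to leverage the equivalence already established in the proof of the preceding theorem and to track what happens to each of the four pure profiles plus the mixed profile as $\alpha$ crosses the threshold $\alpha^\sharp := \frac{\beta^* - \gamma^*}{1-\beta^*\gamma^*}$. Since the key quantity $(1-\alpha)\frac{f(\beta^*)(1-\gamma^*\beta^*)}{\beta^*(1+\gamma^*)} - \frac{f(\beta^*)(1-\beta^*)}{\beta^*}$ is strictly decreasing in $\alpha$ and the preceding theorem showed this is positive precisely when $\alpha < \alpha^\sharp$, it is zero at $\alpha = \alpha^\sharp$ and strictly negative for $\alpha > \alpha^\sharp$. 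I would reuse the same algebraic manipulation as in the previous theorem but with reversed inequality direction.

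For the case $\alpha > \alpha^\sharp$, I would verify one equilibrium at a time. First, the reversed inequalities give $U_i(NS_i,NS_{-i}) > U_i(S_i,NS_{-i})$ for $i=1,2$, so $(NS_1,NS_2)$ is a strict pure Nash equilibrium. Second, in the profile $(S_1,NS_2)$, player $1$ strictly gains by deviating to $NS_1$, so this profile is no longer an equilibrium; by symmetry neither is $(NS_1,S_2)$. Third, for $(S_1,S_2)$ I would invoke the same inequality $U_i(NS_i,S_{-i}) > U_i(S_i,S_{-i})$ that was used in the three-equilibrium theorem, noting that the left-hand side does not involve $\alpha$ while the right-hand side decreases in $\alpha$, so the inequality a fortiori survives for $\alpha > \alpha^\sharp$. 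Finally, the closed-form for $x^* = y^*$ given before the theorem has a numerator that changes sign exactly at $\alpha^\sharp$; for $\alpha > \alpha^\sharp$ it leaves the interval $(0,1)$, so the completely mixed profile is no longer a valid equilibrium. Combined with strict dominance of $NS_i$ over $S_i$, this rules out any mixed equilibrium.

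For the equality case $\alpha = \alpha^\sharp$, the reversed equivalence gives equalities $U_i(NS_i,NS_{-i}) = U_i(S_i,NS_{-i})$ for both players while the inequalities $U_i(NS_i,S_{-i}) > U_i(S_i,S_{-i})$ remain strict. I would then describe the resulting continuum: for any $x \in [0,1]$, the profile in which player $1$ plays $S_1$ with probability $x$ and player $2$ plays $NS_2$ surely is an equilibrium, because player $1$ is indifferent (so any $x$ is a best response) and player $2$, facing a distribution supported on profiles where she weakly prefers $NS_2$, has no profitable deviation; symmetrically one obtains another one-parameter family with the roles swapped. This yields an uncountable set of equilibria, establishing the second statement of the corollary.

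The main obstacle I anticipate is the step asserting that $(S_1,S_2)$ remains non-equilibrium for all $\alpha > \alpha^\sharp$; the cleanest route is the monotonicity argument sketched above, but it implicitly uses that $\frac{f(\beta^*)(1-\beta^*)}{\beta^*}$ and $\frac{f(\gamma^*)(1-\gamma^*\beta^*)}{\gamma^*(1+\beta^*)}$ are comparable in the right direction already at $\alpha = 0$, which is inherited from the three-equilibrium statement. A secondary difficulty is the bookkeeping at equality, where one must rule out spurious equilibria by checking that neither player's indifference extends to the profile $(S_1,S_2)$, so the continuum is genuinely one-dimensional rather than two-dimensional.
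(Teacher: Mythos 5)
Your argument is correct and is substantially more detailed than what the paper actually provides: after stating the corollary, the paper offers only a two-sentence informal remark (a high sensing cost makes the Stackelberg gain not worth the $(1-\alpha)$ penalty, so only the no-sensing Nash profile survives; at equality the profiles $(NS_1,NS_2)$, $(NS_1,S_2)$, $(S_1,NS_2)$ and convex combinations of their payoffs are all equilibria), with no case-by-case verification. Your route --- reversing the Appendix~1 equivalence at the threshold $\alpha^{\sharp}$, observing that $U_i(NS_i,S_{-i})>U_i(S_i,S_{-i})$ holds for \emph{all} $\alpha>0$ by monotonicity in $\alpha$ (equivalently, Pareto-dominance of the leader payoff over the Nash payoff), and concluding strict dominance of $NS_i$ for $\alpha>\alpha^{\sharp}$ --- is the clean way to get uniqueness, and your construction of the one-parameter families at $\alpha=\alpha^{\sharp}$ correctly exhibits the continuum, including the check that the indifference does not extend to $(S_1,S_2)$. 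One small slip: the numerator of the closed-form $x^*=y^*$, namely $(1-\alpha)\tfrac{f(\beta^*)}{\beta^*}(1-\beta^*)-\tfrac{f(\gamma^*)}{\gamma^*}\tfrac{1-\gamma^*\beta^*}{1+\beta^*}$, is the difference $U_1(S_1,S_2)-U_1(NS_1,S_2)$ (up to the weight) and stays negative for all $\alpha>0$; it does \emph{not} change sign at $\alpha^{\sharp}$. What actually happens is that the term $\tfrac{f(\beta^*)}{\beta^*}(1-\beta^*)-(1-\alpha)\tfrac{f(\beta^*)}{\beta^*}\tfrac{1-\gamma^*\beta^*}{1+\gamma^*}$ in the denominator vanishes at $\alpha^{\sharp}$, pushing $x^*$ to $1$ and then out of $(0,1)$. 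This is harmless for your conclusion, since the strict-dominance argument you invoke immediately afterwards already rules out any equilibrium putting weight on $S_i$, but the stated reason should be corrected.
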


First note that if the sensing cost is too high, the gain in
 terms of utility at Stackelberg instead of Nash equilibrium would be
 dominated by the loss of utility due to the sensing activity. In that case,
 the Nash equilibrium would be more efficient. Second remark that in
 case of equality, the action profiles $(NS_1,NS_2)$, $(NS_1,S_2)$, $(S_1,NS_2)$
 and every convex combination of the corresponding payoffs are all equilibrium payoffs.

Now that we have fully characterized the pure and mixed equilibria
of the game, let us turn our attention to correlated equilibria.

Theorem (\ref{theorem:correlated}) allows us to characterize the correlated equilibrium
 utility using the system of linear inequalities (\ref{CorrelatedLinearInequlity}). We
 investigate the situation where the stage game has three Nash equilibria and following
 is better than leading. We suppose that the parameter $\alpha$ satisfies.
\begin{eqnarray}
\alpha<\min(\frac{\beta^* - \gamma^*}{1-\beta^*\gamma^*}, \frac{f(\beta^*) - f(\gamma^*)
+ \frac{f(\beta^*)}{\beta^*} - \frac{f(\gamma^*)}{\gamma^*}  }{f(\beta^*)\frac{1+\beta^*}{\beta^*}} )
\end{eqnarray}
Note that the analysis is similar in the case where Leading is better than Following. However, if the parameter $\alpha>\frac{\beta^* - \gamma^*}{1-\beta^*\gamma^*}$ we have seen that the stage game has only one Nash equilibrium corresponding to play the Nash equilibrium power in the one-shot game. In such a case, no signalling device can increase the set of equilibria. The unique correlated equilibrium is the Nash equilibrium. We characterize an infinity of correlated equilibria.

\begin{theorem}
Any convex combination of Nash equilibria is a correlated equilibrium. In particular if there exists a utility vector $u$  and a parameter $\lambda\in [0,1]$ such that:
\begin{eqnarray}
u_1 &=&  \lambda U_1(S_1,NS_2) + (1 - \lambda ) U_1(NS_1,S_2)\\
u_2 &=&  \lambda U_2(S_1,NS_2) + (1 - \lambda ) U_2(NS_1,S_2)
\end{eqnarray}
Then $u$ is a correlated equilibrium.
\end{theorem}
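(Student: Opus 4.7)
The plan is to construct a canonical correlated equilibrium distribution $Q$ on the action product $A=\{S_1,NS_1\}\times\{NS_2,S_2\}$ that places mass $\lambda$ on the pure Nash profile $(S_1,NS_2)$, mass $1-\lambda$ on the pure Nash profile $(NS_1,S_2)$, and zero mass on the remaining two profiles. By linearity of expectation, $E_Q U_i = \lambda\,U_i(S_1,NS_2) + (1-\lambda)\,U_i(NS_1,S_2) = u_i$ for $i=1,2$, so by Theorem~\ref{theorem:correlated} it suffices to verify the canonical inequalities of the definition above.

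First I would compute, for each player and each recommendation carrying positive mass under $Q$, the conditional distribution over the opponent's action. For player~$1$, the only profile in the support containing $S_1$ is $(S_1,NS_2)$, so $Q(NS_2\mid S_1)=1$; the only profile containing $NS_1$ is $(NS_1,S_2)$, so $Q(S_2\mid NS_1)=1$. The analogous statement holds for player~$2$. The reason the argument works so cleanly is that the two pure equilibria $(S_1,NS_2)$ and $(NS_1,S_2)$ differ in \emph{both} coordinates, which makes the recommendation received by a player a sufficient statistic for the opponent's action.

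With these conditional distributions in hand, the CE inequality for player~$1$ recommended $S_1$ reduces to $U_1(S_1,NS_2)\geq U_1(NS_1,NS_2)$, which is precisely the best-response condition defining $(S_1,NS_2)$ as a Nash equilibrium. Similarly, the inequality conditional on $NS_1$ reduces to $U_1(NS_1,S_2)\geq U_1(S_1,S_2)$, which follows from $(NS_1,S_2)$ being a Nash equilibrium; both strict inequalities are already listed in the enumeration after the three-equilibria theorem, under the standing hypothesis on $\alpha$. A symmetric pair of inequalities handles player~$2$, which finishes the verification. There is essentially no hard step here: the argument collapses to the observation that the chosen support makes each recommendation perfectly informative, reducing every CE constraint to a Nash best-response inequality; the only mild caveat is the degenerate case $\lambda\in\{0,1\}$, where $Q$ is itself a pure Nash equilibrium and the conclusion is immediate.
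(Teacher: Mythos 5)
Your proposal is correct and follows essentially the same route as the paper: the paper exhibits exactly the same distribution $Q$ (mass $\lambda$ on $(S_1,NS_2)$, mass $1-\lambda$ on $(NS_1,S_2)$, zero elsewhere) and appeals to Aumann's characterization, while you additionally spell out the verification that the degenerate conditionals reduce each canonical incentive constraint to a Nash best-response inequality, which holds under the standing assumption on $\alpha$. This is a sound and somewhat more complete write-up of the paper's argument.
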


The above result state that any distribution $Q$ defined as follows with $\lambda\in [0,1]$ is a correlated equilibrium.
\begin{figure}[!ht]
\begin{center}
\psset{xunit=0.15cm,yunit=0.1cm}
\begin{pspicture}(-1,-1)(19,19)
\psframe(0,0)(16,16)
\psline(0,8)(16,8)
\psline(8,0)(8,16)
\rput(4,12){$0$}
\rput(12,12){$1-\lambda$}
\rput(4,4){$\lambda$}
\rput(12,4){$0$}
\rput(-1.7,5){$S_1$}
\rput(-2.2,12){$NS_1$}
\rput(4,18){$NS_2$}
\rput(12,18){$S_2$}
\end{pspicture}
\end{center}
\end{figure}
The canonical signalling device which should be added to the game consist in a lottery
with parameter $\lambda$ over the actions $(S_1,NS_2)$ and $(NS_1,S_2)$ and of signalling
structure such that each player receives her component. For example, if $(S_1,NS_2)$
is chosen the player 1 receives the signal ``play $S_1$" whereas player 2 receives the signal ``play $NS_2$".

The correlated equilibrium utilities are represented by the bold line. The signalling
device increase the achievable utility region by adding the light gray area.
\begin{footnotesize}
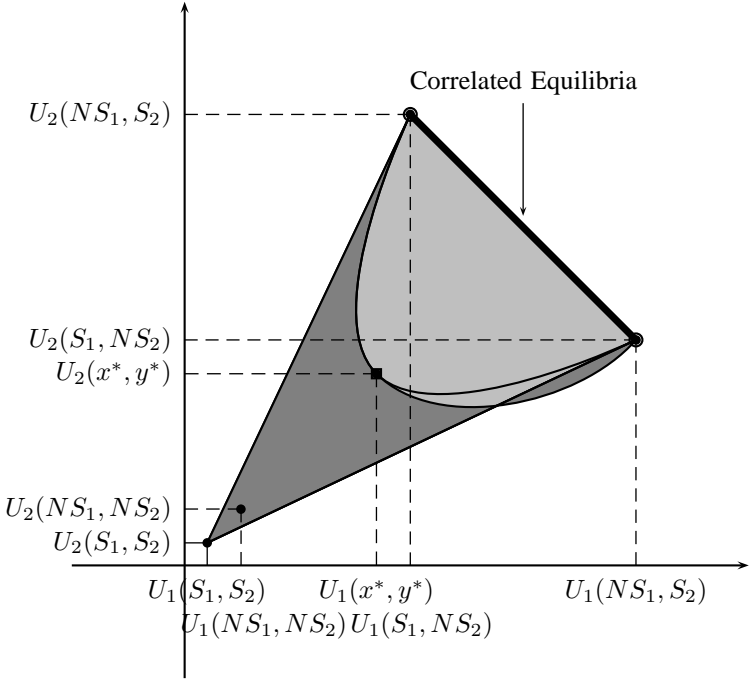
\begin{figure}[!ht]
\begin{center}
\psset{xunit=1.5cm,yunit=1.5cm}
\begin{pspicture}(-1,-1)(5,5)
\pspolygon[fillstyle=solid,fillcolor=lightgray]  (0.2,0.2)(4,2)(2,4)
\pscustom[fillstyle=solid,fillcolor=gray]{
\psline(0.2,0.2)(4,2)
\pscurve(4,2)(1.7,1.7)(2,4)
}
\psline[arrows=->](0,-1)(0,5)
\psline[arrows=->](-1,0)(5,0)
\psdots(4,2)(2,4)(1.7,1.7)(0.2,0.2)(0.5,0.5)
\pscircle(2,4){0.1}
\pscircle(4,2){0.1}
\psframe(1.65,1.65)(1.75,1.75)
\psline(0.2,0.2)(4,2)
\psline(0.2,0.2)(2,4)
\pscurve(4,2)(1.7,1.7)(2,4)
\uput[l](0,1.7){$U_2(x^*,y^*)$}
\uput[l](0,2){$U_2(S_1,NS_2)$}
\uput[l](0,4){$U_2(NS_1,S_2)$}
\uput[l](0,0.2){$U_2(S_1,S_2)$}
\uput[l](0,0.5){$U_2(NS_1,NS_2)$}
\uput[d](1.7,0){$U_1(x^*,y^*)$}
\uput[d](2.1,-0.3){$U_1(S_1,NS_2)$}
\uput[d](4,0){$U_1(NS_1,S_2)$}
\uput[d](0.2,0){$U_1(S_1,S_2)$}
\uput[d](0.7,-0.3){$U_1(NS_1,NS_2)$}
\uput[d](3,4.5){Correlated Equilibria}
\psline[linewidth=0.5pt]{->}(3,4.1)(3,3.1)
\psline[linestyle=dashed,linewidth=0.5pt](2,0)(2,4)
\psline[linestyle=dashed,linewidth=0.5pt](0,4)(2,4)
\psline[linestyle=dashed,linewidth=0.5pt](4,0)(4,2)
\psline[linestyle=dashed,linewidth=0.5pt](0,2)(4,2)
\psline[linestyle=dashed,linewidth=0.5pt](1.7,0)(1.7,1.7)
\psline[linestyle=dashed,linewidth=0.5pt](0,1.7)(1.7,1.7)
\psline[linestyle=dashed,linewidth=0.5pt](0.2,0)(0.2,0.2)
\psline[linestyle=dashed,linewidth=0.5pt](0,0.2)(0.2,0.2)
\psline[linestyle=dashed,linewidth=0.5pt](0.5,0)(0.5,0.5)
\psline[linestyle=dashed,linewidth=0.5pt](0,0.5)(0.5,0.5)
\psline[linewidth=2.7pt](2,4)(4,2)
\end{pspicture}
\end{center}
\caption{The Correlated Equilibria.}
\end{figure}
\end{footnotesize}

\section{Conclusion}

In this paper we have introduced a new power control game where the
action of a player is hybrid, one component is discrete while the
other is continuous. Whereas the general study of these games
remains to be done, it turns out that in our case we can prove the
existence of a Braess paradox which allows us to restrict our
attention to two separate games played consecutively: a finite game
where the players decide to sense or not and a compact game where
the transmitter chooses his power level. We have studied in details
the sensing game. In particular, it is proved it is weighted
potential. Also, by characterizing the correlated equilibria of this
game we show what is achievable in terms of fairness. Much work
remains to be done to generalize all these results to games with
arbitrary number of players and conduct simulations in relevant
wireless scenarios.

\section{Appendix 1}

\begin{eqnarray*}
&&\alpha<\frac{\beta^* - \gamma^*}{1-\beta^*\gamma^*}\\
&\Longleftrightarrow & \frac{1-\gamma^*\beta^* - \beta^*-\gamma^* }{(1-\gamma^*\beta^*)} <  1- \alpha \\
&\Longleftrightarrow & (1-\beta^*)( 1 +\gamma^*) < (1 - \alpha)[(1-\beta^*)(1+\gamma^*) + \gamma^* + \beta^* ] \\
&\Longleftrightarrow & \frac{f(\beta^*)}{\beta^*}(1-\beta^*) < (1 - \alpha)\frac{f(\beta^*)}{\beta^*}\frac{ 1 - \beta^*\gamma^*}{ 1 +\gamma^*} \\
&\Longleftrightarrow & \frac{R_1g_1f(\beta^*)(1-\beta^*)}{\sigma^2\beta^*} < (1-\alpha)\frac{R_1g_1f(\beta^*)(1-\gamma^*\beta^*)}{\sigma^2\beta^*(1+\gamma^*)}
\end{eqnarray*}

\section{Appendix 2}
\begin{figure*}[ht]
\begin{eqnarray*}
&&\frac{R_1g_1f(\beta^*)(1-\beta^*)}{\sigma^2\beta^*} \cdot y^* +  \frac{R_1g_1f(\gamma^*)(1-\gamma^*\beta^*)}{\sigma^2\gamma^*(1+\beta^*)} \cdot (1- y^*) \\
&=&  (1-\alpha)\frac{R_1g_1f(\beta^*)(1-\gamma^*\beta^*)}{\sigma^2\beta^*(1+\gamma^*)}  \cdot y^* + (1-\alpha)\frac{R_1g_1f(\beta^*)(1-\beta^*)}{\sigma^2\beta^*}  \cdot (1- y^*) \\
\Longleftrightarrow  &&  y^*  \cdot [ \frac{R_1g_1f(\beta^*)(1-\beta^*)}{\sigma^2\beta^*}  -  (1-\alpha)\frac{R_1g_1f(\beta^*)(1-\gamma^*\beta^*)}{\sigma^2\beta^*(1+\gamma^*)}\\
&&  + (1-\alpha)\frac{R_1g_1f(\beta^*)(1-\beta^*)}{\sigma^2\beta^*} -  \frac{R_1g_1f(\gamma^*)(1-\gamma^*\beta^*)}{\sigma^2\gamma^*(1+\beta^*)} ] \\
&=&
(1-\alpha)\frac{R_1g_1f(\beta^*)(1-\beta^*)}{\sigma^2\beta^*}
- \frac{R_1g_1f(\gamma^*)(1-\gamma^*\beta^*)}{\sigma^2\gamma^*(1+\beta^*)} \\
\Longleftrightarrow && y^* = \frac{(1-\alpha)\frac{f(\beta^*)}{\beta^*}(1-\beta^*) - \frac{f(\gamma^*)}{\gamma^*} \frac{1- \gamma^*\beta^*}{ 1+ \beta^*}  }{(1-\alpha)\frac{f(\beta^*)}{\beta^*}(1-\beta^*) - \frac{f(\gamma^*)}{\gamma^*} \frac{1- \gamma^*\beta^*}{ 1+ \beta^*}   + \frac{f(\beta^*)}{\beta^*}(1-\beta^*) - (1-\alpha) \frac{f(\beta^*)}{\beta^*} \frac{1- \gamma^*\beta^*}{ 1+ \gamma^*} }
\end{eqnarray*}
\end{figure*}

Replacing the above $y^*$ into the indifference equation, we obtain the utility of player 1 at the mixed equilibrium.
\begin{figure*}[ht]
\begin{eqnarray*}
U_1(x^*,y^*) &=& \frac{\frac{R_1g_1f(\beta^*)(1-\beta^*)}{\sigma^2\beta^*}\frac{R_1g_1f(\gamma^*)(1-\gamma^*\beta^*)}{\sigma^2\gamma^*(1+\beta^*)}
- \frac{R_1g_1}{\sigma^2}\frac{f(\gamma^*)}{\gamma^*} \frac{1- \gamma^*\beta^*}{ 1+ \beta^*}\frac{R_1g_1}{\sigma^2} (1-\alpha) \frac{f(\beta^*)}{\beta^*} \frac{1- \gamma^*\beta^*}{ 1+ \gamma^*}
}{ \frac{R_1g_1}{\sigma^2}(1-\alpha)\frac{f(\beta^*)}{\beta^*}(1-\beta^*) -  \frac{R_1g_1}{\sigma^2}\frac{f(\gamma^*)}{\gamma^*} \frac{1- \gamma^*\beta^*}{ 1+ \beta^*}
+ \frac{R_1g_1}{\sigma^2}\frac{f(\beta^*)}{\beta^*}(1-\beta^*) -  \frac{R_1g_1}{\sigma^2}(1-\alpha) \frac{f(\beta^*)}{\beta^*} \frac{1- \gamma^*\beta^*}{ 1+ \gamma^*} }\\
&+& \frac{ \frac{R_1g_1}{\sigma^2}(1-\alpha)\frac{f(\beta^*)}{\beta^*}(1-\beta^*)\frac{R_1g_1}{\sigma^2}\frac{f(\beta^*)}{\beta^*}(1-\beta^*) -\frac{R_1g_1f(\beta^*)(1-\beta^*)}{\sigma^2\beta^*}\frac{R_1g_1f(\gamma^*)(1-\gamma^*\beta^*)}{\sigma^2\gamma^*(1+\beta^*)}
}{ \frac{R_1g_1}{\sigma^2}(1-\alpha)\frac{f(\beta^*)}{\beta^*}(1-\beta^*) -  \frac{R_1g_1}{\sigma^2}\frac{f(\gamma^*)}{\gamma^*} \frac{1- \gamma^*\beta^*}{ 1+ \beta^*}
+ \frac{R_1g_1}{\sigma^2}\frac{f(\beta^*)}{\beta^*}(1-\beta^*) -  \frac{R_1g_1}{\sigma^2}(1-\alpha) \frac{f(\beta^*)}{\beta^*} \frac{1- \gamma^*\beta^*}{ 1+ \gamma^*} }\\
&=& \frac{R_1g_1}{\sigma^2} \frac{(1-\alpha)\frac{f(\beta^*)}{\beta^*}(1-\beta^*)\frac{f(\beta^*)}{\beta^*}(1-\beta^*) - \frac{f(\gamma^*)}{\gamma^*} \frac{1- \gamma^*\beta^*}{ 1+ \beta^*} (1-\alpha) \frac{f(\beta^*)}{\beta^*} \frac{1- \gamma^*\beta^*}{ 1+ \gamma^*}
}{(1-\alpha)\frac{f(\beta^*)}{\beta^*}(1-\beta^*) - \frac{f(\gamma^*)}{\gamma^*} \frac{1- \gamma^*\beta^*}{ 1+ \beta^*}   + \frac{f(\beta^*)}{\beta^*}(1-\beta^*) - (1-\alpha) \frac{f(\beta^*)}{\beta^*} \frac{1- \gamma^*\beta^*}{ 1+ \gamma^*} }
\end{eqnarray*}
\end{figure*}
The same argument applies:
\begin{figure*}[ht]
\begin{eqnarray*}
U_2(x^*,y^*) = \frac{R_2g_2}{\sigma^2} \frac{(1-\alpha)\frac{f(\beta^*)}{\beta^*}(1-\beta^*)\frac{f(\beta^*)}{\beta^*}(1-\beta^*) - \frac{f(\gamma^*)}{\gamma^*} \frac{1- \gamma^*\beta^*}{ 1+ \beta^*} (1-\alpha) \frac{f(\beta^*)}{\beta^*} \frac{1- \gamma^*\beta^*}{ 1+ \gamma^*}
}{(1-\alpha)\frac{f(\beta^*)}{\beta^*}(1-\beta^*) - \frac{f(\gamma^*)}{\gamma^*} \frac{1- \gamma^*\beta^*}{ 1+ \beta^*}   + \frac{f(\beta^*)}{\beta^*}(1-\beta^*) - (1-\alpha) \frac{f(\beta^*)}{\beta^*} \frac{1- \gamma^*\beta^*}{ 1+ \gamma^*} }
\end{eqnarray*}
\end{figure*}


\section{Appendix 3}
\begin{eqnarray*}
&&\alpha < \frac{f(\beta^*) - f(\gamma^*) + \frac{f(\beta^*)}{\beta^*} - \frac{f(\gamma^*)}{\gamma^*}  }{f(\beta^*)\frac{1+\beta^*}{\beta^*}}\\
&\Longleftrightarrow&  1  -  \alpha > \frac{ f(\beta^*)\frac{1+\beta^*}{\beta^*}  -  f(\gamma^*)\frac{1+\gamma^*}{\gamma^*}   }{f(\beta^*)\frac{1+\beta^*}{\beta^*}}\\
&\Longleftrightarrow&  (1  -  \alpha)  \frac{f(\beta^*)}{\beta^*} \frac{1- \gamma^*\beta^*}{ 1+ \beta^*}  > \frac{f(\gamma^*)}{\gamma^*} \frac{1- \gamma^*\beta^*}{ 1+ \gamma^*}
\end{eqnarray*}

\section{Appendix 4}
The proof comes from the theorem of Monderer and Shapley 1996 (see Sandholm "Decomposition of Potential" 2010)
\begin{theorem}
The game $G$ is a potential game if and only if for every players $i,j\in K$, every pair of actions $s_i,t_i\in S_i$ and $s_j,t_j\in S_j$ and every joint action $s_k\in S_{K\backslash \{i,j\}}$, we have that
\begin{eqnarray*}
&& U_i(t_i,s_j,s_k)- U_i(s_i,s_j,s_k) + U_i(s_i,t_j,s_k) -U_i(t_i,t_j,s_k) +\\
&& U_j(t_i,t_j,s_k)- U_j(t_i,s_j,s_k) +U_j(s_i,s_j,s_k) -U_j(s_i,t_j,s_k) =0
\end{eqnarray*}
\end{theorem}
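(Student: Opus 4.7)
My plan is to prove both implications of this standard Monderer--Shapley characterization. The forward direction is an immediate telescoping, and the reverse is a path-integral construction of the potential with the 4-cycle hypothesis used precisely to guarantee path independence.

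\textbf{Forward direction ($V$ exists $\Rightarrow$ 4-cycle sum is zero).} Assume $V:S\to\mathbb{R}$ is a potential. I would replace each of the four $U$-differences appearing in the cycle by the corresponding $V$-difference, applying the potential identity alternately to player $i$ and to player $j$ (with the remaining coordinates held at the appropriate values $s_k$ and $s_j, t_j$ or $s_i, t_i$). The four resulting $V$-differences involve only the four profiles $(a_i, b_j, s_k)$ with $a_i \in \{s_i, t_i\}$ and $b_j \in \{s_j, t_j\}$, and each such profile appears exactly once with a $+$ sign and once with a $-$ sign. The sum therefore telescopes to $0$.

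\textbf{Reverse direction (4-cycle sum is zero $\Rightarrow$ $V$ exists).} I fix a reference profile $s^0 \in S$. For any $s \in S$ and any path $\pi = (y^0, y^1, \ldots, y^m)$ with $y^0 = s^0$, $y^m = s$, and consecutive profiles differing in a single coordinate $i_\ell$, I define
\[
V_\pi(s) := \sum_{\ell=1}^{m} \bigl[ U_{i_\ell}(y^\ell) - U_{i_\ell}(y^{\ell-1}) \bigr].
\]
Once path independence is established (the key step, below), I set $V(s) := V_\pi(s)$. The potential identity $U_i(s_i, s_{-i}) - U_i(t_i, s_{-i}) = V(s_i, s_{-i}) - V(t_i, s_{-i})$ then follows by concatenating any path to $(t_i, s_{-i})$ with the single move that changes the $i$-th coordinate from $t_i$ to $s_i$, since the last contribution to the sum is exactly the desired $U_i$-difference.

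\textbf{Path independence and the main obstacle.} Two paths that differ only by swapping two adjacent single-coordinate updates involving two distinct players $i,j$ (with all other coordinates held at some $s_k$) yield values of $V_\pi(s)$ that differ by exactly the 4-cycle sum in the hypothesis, hence by $0$. The main obstacle is the combinatorial step showing that any two one-coordinate-at-a-time paths between $s^0$ and $s$ can be transformed into each other by a finite sequence of such adjacent swaps, together with cancellations of trivial forward-then-backward subpaths (each of the latter contributes $0$ because a single move and its reverse give opposite $U_i$-differences). The product structure of $S$ makes this reduction possible: any path can be homotoped step by step to the canonical path that updates coordinates in the fixed order $1, 2, \ldots, |K|$, every elementary swap being validated by the 4-cycle hypothesis applied to the pair of coordinates being swapped. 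This yields that $V_\pi(s)$ depends only on $s$, which finishes the construction and the proof.
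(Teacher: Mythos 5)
Your argument is correct, but there is nothing in the paper to measure it against: the paper states this theorem as a quoted result of Monderer and Shapley (1996) (via Sandholm's ``Decomposition of Potential'') and gives no proof of it, using it only as a black box in Appendix 4 to derive the sensing-game criterion. What you have written is essentially the classical proof that the paper delegates to the reference. The forward direction is exactly the four-term telescoping you describe. For the reverse direction, your path-integral definition of $V$ and the observation that swapping two adjacent moves of distinct coordinates changes the path sum by precisely the four-term cycle expression in the statement are both right, and the final verification of the potential identity by appending a single $i$-move to a path is correct. Monderer and Shapley package the same content as ``the integral over every closed path vanishes iff it vanishes over every closed path of length four,'' proved by induction on path length; your homotopy-to-a-canonical-path formulation is equivalent and arguably cleaner.

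Two points deserve to be made explicit if this is written out in full, though neither is a conceptual gap. First, besides cancelling a move followed by its exact inverse, you must also merge two consecutive moves of the \emph{same} coordinate ($a_i \to b_i \to c_i$ collapses to $a_i \to c_i$); this costs nothing since the two $U_i$-differences telescope to one, but it is needed to reach the canonical path in which each coordinate is updated exactly once. Second, the assertion that adjacent transpositions together with these merges suffice to deform any path into the canonical one is exactly the claim that the cycle space of the product graph on $S$ is generated by the two-coordinate squares (your hypothesis) and the one-coordinate triangles (which are trivially closed); the bubble-sort homotopy you sketch does establish this, but it should be stated as a lemma rather than an aside, since it is the only place the product structure of $S$ is genuinely used.
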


Let us prove that the  two conditions provided by our theorem are equivalent to the one of Monderer and Shapley's theorem.
We introduce the following notation defined for each player $i\in K$ and each action $T\in \mathcal{S}$.
\begin{eqnarray}
w_i&=&R_ig_i\\
U^T(t_i,t_j,s_k)&=&\frac{U_i^T(t_i,t_j,s_k)}{w_i}
\end{eqnarray}
For every players $i,j\in K$, every pair of actions $s_i,t_i\in S_i$ and $s_j,t_j\in S_j$ and every joint action $s_k\in S_{K\backslash \{i,j\}}$, we have the following equivalences:
\begin{eqnarray*}
&&U_i(t_i,s_j,s_k)- U_i(s_i,s_j,s_k)\\
&& + U_i(s_i,t_j,s_k) -U_i(t_i,t_j,s_k) \\
&+ & U_j(t_i,t_j,s_k)- U_j(t_i,s_j,s_k)\\
&& +U_j(s_i,s_j,s_k) -U_j(s_i,t_j,s_k)=0\\
\Longleftrightarrow
&&w_i(U^{T}(t_i,s_j,s_k)- U^S(s_i,s_j,s_k)\\
&& + U^S(s_i,t_j,s_k) -U^T(t_i,t_j,s_k)) \\
&+& w_j(U^T(t_i,t_j,s_k)- U^S(t_i,s_j,s_k)\\
&& +U^S(s_i,s_j,s_k) -U^T(s_i,t_j,s_k))=0\\
\Longleftrightarrow
&&(w_i-w_j)(U^{T}(t_i,s_j,s_k)- U^S(s_i,s_j,s_k)\\
&& + U^S(s_i,t_j,s_k) -U^T(t_i,t_j,s_k))=0 \\
&&
\Longleftrightarrow
\begin{cases}
w_i = w_j \\
U^{T}(t_i,s_j,s_k)- U^S(s_i,s_j,s_k)\\
 + U^S(s_i,t_j,s_k) -U^T(t_i,t_j,s_k)= 0 \\
\end{cases}
\end{eqnarray*}
Thus the sensing game is a potential game if and only if one of the two following condition is satisfied:
\begin{eqnarray}
&&\forall i,j\in K\quad R_ig_i=R_jg_j\\
&&\forall i,j\in K,\; s_i,t_i\in S_i,\;\forall s_j,t_j\in S_j,\;\forall s_k\in S_{K\backslash \{i,j\}}\\
&&U^{T}(t_i,s_j,s_k)- U^S(s_i,s_j,s_k) \\
&&+ U^S(s_i,t_j,s_k) -U^T(t_i,t_j,s_k)= 0
\end{eqnarray}

\section{Appendix 5}
The proof of this theorem follows the same line of the previous theorem. It suffices to show that the auxiliary game defined as follows is a potential game.
\begin{eqnarray}
\widetilde{G}=(K,(\mathcal{S})_{i\in K},(\tilde{U}_i)_{i\in K})
\end{eqnarray}
Where the utility are defined by the following equations with $w_i = \frac{R_ig_i}{\sigma^2}$.
\begin{eqnarray}
\tilde{U}_i(s_i,s_{-i})&=&\frac{U_i(s_i,s_{-i})}{w_i}
\end{eqnarray}
From the above demonstration, it is easy to show that, for every players $i,j\in K$, every pair of actions $s_i,t_i\in S_i$ and $s_j,t_j\in S_j$ and every joint action $s_k\in S_{K\backslash \{i,j\}}$:
\begin{eqnarray}
&&\tilde{U}_i(t_i,s_j,s_k)- \tilde{U}_i(s_i,s_j,s_k)\\
& +& \tilde{U}_i(s_i,t_j,s_k) -\tilde{U}_i(t_i,t_j,s_k) \\
&+ & \tilde{U}_j(t_i,t_j,s_k)- \tilde{U}_j(t_i,s_j,s_k)\\
&+& \tilde{U}_j(s_i,s_j,s_k) -\tilde{U}_j(s_i,t_j,s_k)=0
\end{eqnarray}
We conclude that the sensing game is a weighted potential game.

\bibliographystyle{plain}
\bibliography{BiblioMael}

\end{document}